\long\def\comment#1{} 
\newtheorem{thm}{Theorem}[section]
\newtheorem{cor}[thm]{Corollary}
\newtheorem{lem}[thm]{Lemma}
\newtheorem{dfn}{Definition}
\long\def\comment#1{}    
\newcommand{\centerfig}[2]{\centerline{\dofig{#1}{#2}}}
\newcommand{\dofig}[2]{\resizebox{#1}{!}{\includegraphics{#2}}}
\begin{document}

\newpage

\begin{center}
{\LARGE\bf Ensemble Patch Transformation: A New Tool for Signal Decomposition 
\medskip
}
\vskip 7mm

{\large\sc Donghoh Kim$^{1}$, Guebin Choi$^{2}$ and Hee-Seok Oh$^{2}$}\\
{\large $^{1}$Sejong University, Seoul, Korea, $^{2}$Seoul National University, Seoul, Korea}
\end{center}
\vskip 5mm

\noindent 
{\bf Abstract}: This paper considers the problem of signal decomposition and data visualization. For this purpose, we introduce a new multiscale transform, termed `ensemble patch transformation' that enhances identification of local characteristics embedded in a signal and provides multiscale visualization according to different levels; hence, it is useful for data analysis and signal decomposition. In literature, there are data-adaptive decomposition methods such as empirical mode decomposition (EMD) by Huang et al. (1998). Along the same line of EMD, we propose a new decomposition algorithm that extracts meaningful components from a signal that belongs to a large  class of signals, compared to the previous methods. Some theoretical properties of the proposed algorithm are investigated. To evaluate the proposed method, we analyze several synthetic examples and a real-world signal.
%
%
%
%

\vskip 2mm
\noindent {\it Keywords}: Decomposition; Ensemble filter; Extraction; Iteration; Multiscale method; Visualization.

\pagenumbering{arabic}

\pagenumbering{arabic}

%
\section{Introduction}
In this paper, we propose a new multiscale method for data analysis and signal decomposition, termed `ensemble patch transformation', which adopts a multiscale concept of scale-space theory in computer vision of \cite{Lindeberg1994}. The proposed ensemble patch transformation consists of two key concepts. The first one is `patch process' that is defined as a data-dependent patch at certain time point $t$ for a given sequence of data.  The patch process is designed for identifying dependent structures of data according to various sizes of patches. The second concept is `ensemble' that is obtained by shifting the time point $t$ for the patch, which is suitable for representing temporal variation of data efficiently by enhancement of temporal resolution of them. Moreover, it is feasible that the proposed ensemble patches provide various statistics; hence, these can be easily adapted for various purposes of data analysis.

We focus on the problem of signal decomposition and extraction using the proposed ensemble patch transformation. A successful recognition of the local frequency patterns of a signal is a crucial step for signal decomposition. Empirical mode decomposition (EMD) by Huang et al. (1998) identifies such local patterns through local extrema. In the case that the local extrema reflect the time-varying amplitude and frequency, EMD decomposes a signal effectively according to its frequencies. However, when the high-frequency pattern is not distinct in a signal, EMD fails to identify a superimposed component; thus, it produces artificial components during the decomposition process. To clarify this problem of EMD and provide a motivation of the proposed method, we consider a synthetic signal that consists of two components $X_t = \cos(100 \pi t) + 4 \cos(60 \pi t), \mbox{ }t \in[0, 1]$. Figure~\ref{signal3} shows signal $X_t$ and its two components. The middle panel of Figure~\ref{signal3decom} illustrates the decomposed results produced by EMD, where the dotted lines represent true components and the solid lines are extracted components. As one can see, EMD fails to decompose the two components of the signal where the frequency ratio of two components is relatively small. In other words, when the local pattern of the high-frequency component is not distinct, EMD does not work properly to decompose a signal; hence, it fails to extract the sinusoid components effectively. We remark that Rilling and Flandrin (2008) discussed the ranges of frequency and amplitude ratios when EMD performs for decomposition of signals. On the other hand, the left panel of Figure~\ref{signal3decom} presents the decomposition results by the proposed method in Section~\ref{sec.proposal}, which identify the true components efficiently. The right panel shows the decomposition results by ensemble EMD (EEMD) of Wu and Huang (2009), which cannot extract the true ones properly.   
\begin{figure}[!t]
\centering
\includegraphics[width=0.85\linewidth]{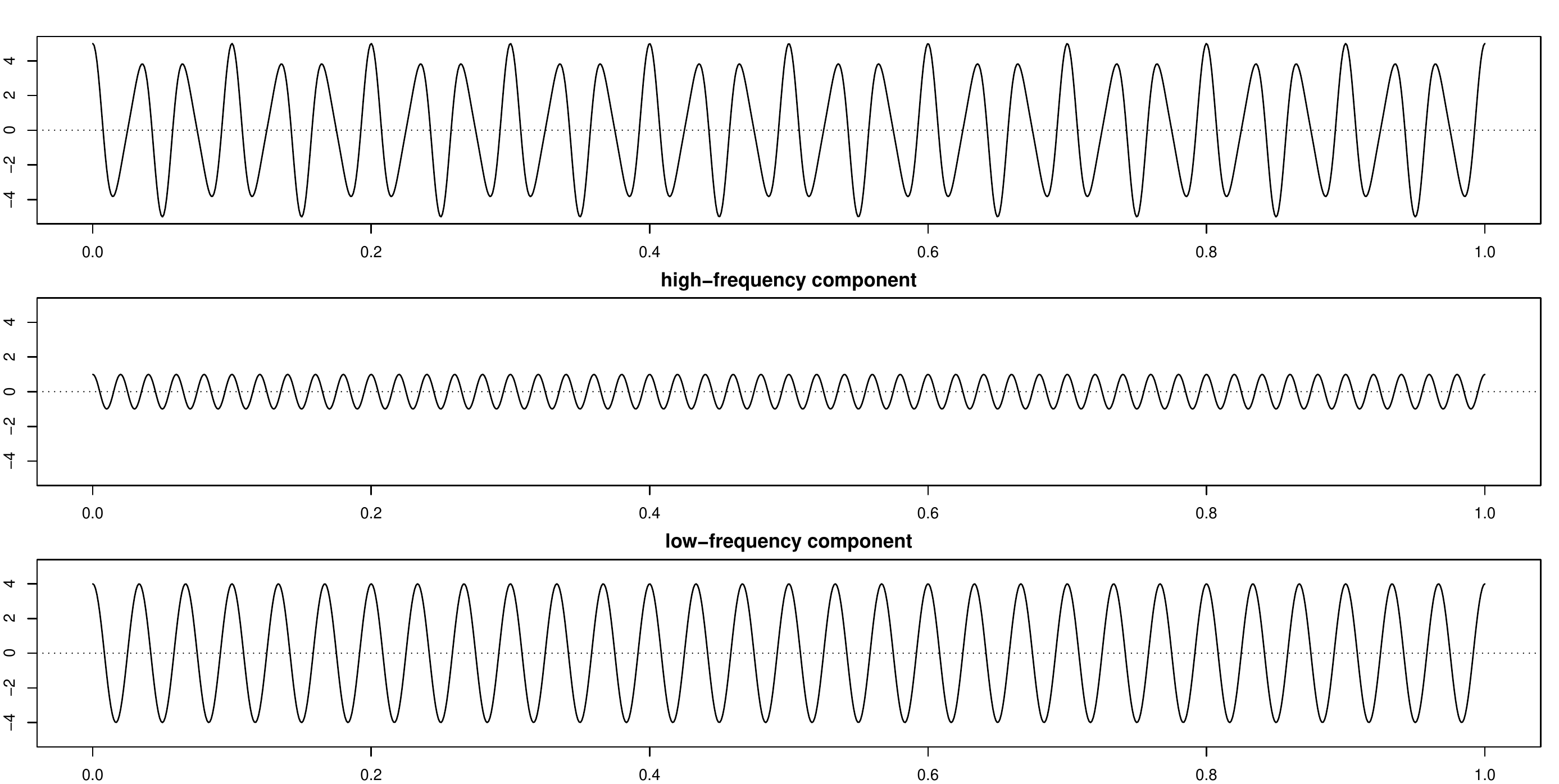}
\caption{A signal of two components, $X_t = \cos(100 \pi t) + 4 \cos(60 \pi t)$.}
\label{signal3}
\end{figure}

\begin{figure}[!t]
\centering
\includegraphics[width=0.95\linewidth]{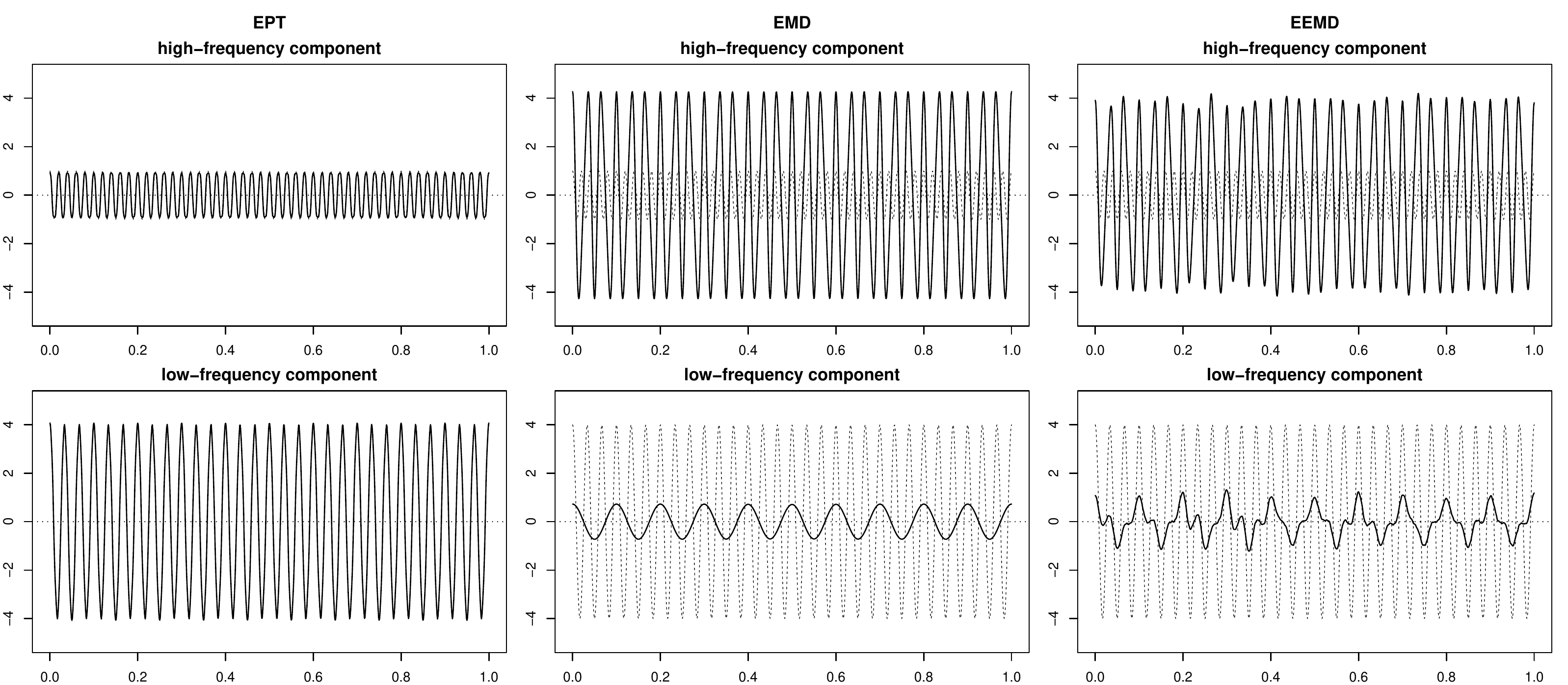}
\caption{Decomposition of signal $X_t = \cos(100 \pi t) + 4 \cos(60 \pi t)$.}
\label{signal3decom}
\end{figure}


The rest of the paper is organized as follows. Section~\ref{sec.EPT} introduces ensemble patch transformation and investigates its utility as a multiscale method. In addition, various statistical measures based on ensemble patch transformation are discussed for data analysis and visualization technique. In Section~\ref{sec.proposal}, a new method for signal decomposition is proposed with a practical algorithm. Furthermore, some theoretical properties of the proposed algorithm are investigated. Section~\ref{sec.simulation} presents simulation studies and a real data example to evaluate empirical performance of the proposed method. In Section 5, as a practical issue of the proposed method, selection of the size parameter is discussed. Lastly, conclusions are addressed in Section~\ref{sec.conclusion}.

Before closing this section, we remark that in literature, there are numerous studies for signal decomposition. \cite{Dragomiretskiy2014} developed variational mode decomposition (VMD) for tone detection and separation of a signal. VMD first conducts discrete Fourier transform for detecting frequency information of each mode, and then identifies several meaningful modes using the detected main frequencies. For this procedure, it is required to preset the number of modes for the decomposition. However, it is difficult to know the number of the meaningful modes according to their frequency information in advance. 
As a data-adaptive procedure, Huang et al. (1998) proposed  empirical mode decomposition (EMD). Due to its robustness to presence of nonlinearity and nonstationarity, EMD has been applied to various fields. Since EMD is based on empirical algorithm, it raises several methodological issues such as identification of local frequency pattern and intermittency. There have been many proposals to enhance the performance of the conventional EMD. Wu and Huang (2009) developed the ensemble EMD (EEMD) taking average of the simulated signals, and its variants have been proposed by several authors. These include the complementary ensemble EMD of Yeh et al. (2010), the complete ensemble EMD with adaptive noise of Torres et al. (2011), and the improved complete ensemble EMD of Colominas et al. (2014). Daubechies  et al. (2011) proposed an alternative method of EMD, termed, synchrosqueezed wavelet transforms, which is based on reassignment methods of wavelet coefficients. Thakur et al. (2013) discussed a selection method of various parameters in the discrete version, and Thakur and Wu (2011) and Meignen et al. (2012)  proposed some methods that are robust to non-uniform samples and noise via synchrosqueezing techniques. 

\section{Ensemble Patch Transform}\label{sec.EPT}

\subsection{Multiscale Patch Transform}
In this section, we introduce a multiscale patch transform of one-dimensional sequence that is designed for processing data and building blocks. We first define a patch process of a real-valued univariate process $(X_t)_{t}$. A patch at the point $(t, X_t)$ is a polygon containing neighbors of the point $(t, X_t)$. The patch is a tool capturing the multiscale characteristics of a signal. A level of multiresolution is controlled by the size of the patch, and various shapes of the patch can be used according to the purpose of data analysis. The patch is formally defined by its shape and size. Let $\mathcal{T}=\{\tau_i\}_i$ be a set of size parameters for patch with a certain shape such as rectangle and oval. For $\tau\in \mathcal{T}$, let $P_t^{\tau}(X_t)$ denote the patch process at the point $(t, X_t)$ that is generated by a certain shape with size parameter $\tau$. We further define a multiscale patch transform $MP_t^{\mathcal{T}}(X_t)$ at the point $(t, X_t)$ that is defined as a sequence of all patches according to various $\tau_i$'s, 
\[
MP_t^{\mathcal{T}}(X_t):=\{P_t^{\tau_i}(X_t)\}_{i=1,\ldots,|\mathcal{T}|}. 
\]
As one can see, the precise definition of $MP_t^{\mathcal{T}}(X_t)$ depends on the shape of the patch. As for typical case, rectangle and oval can be considered as follows. Of course, we can take other shapes as well.   

\noindent {\bf Rectangle patch} : For a given point $(t, X_t)$ and $\tau\in \mathcal{T}$, this patch is centered at the point $(t, X_t)$ and is a closed rectangle formed by the points $(t+k, \min_{k\in[-\tau/2,\tau/2]}\{X_{t+k}\}-0.5 \gamma \tau)$ and 
$(t+k, \max_{k\in[-\tau/2,\tau/2]}\{X_{t+k}\}+0.5 \gamma \tau)$ for $k\in[-\tau/2,\tau/2]$.
For the rectangle patch, the width length is $\tau$ and hight length $h_t^\tau$ is
\[
h_t^\tau=\max_{k\in[-\tau/2,\tau/2]}\{X_{t+k}\}-\min_{k\in[-\tau/2,\tau/2]}\{X_{t+k}\}+\gamma\tau,
\]  
where $\gamma$ is a scale factor. In this study, we set $\gamma=1$. 

\noindent {\bf Oval patch} : For a given point $(t, X_t)$ and $\tau\in \mathcal{T}$, this patch is centered at the point $(t, X_t)$ and is characterized by boundaries  $(t+k, X_{t+k}\pm\gamma\sqrt{\tau^2/4-k^2})$, $k\in[-\tau/2,\tau/2]$ where $\gamma$ is a scale factor. The width length for the oval patch is $\tau$ as for the rectangle patch, and the height is of decreasing pattern as moving away from a given point $(t, X_t)$.

\begin{figure}[!t]
\centering
\includegraphics[width=0.75\linewidth]{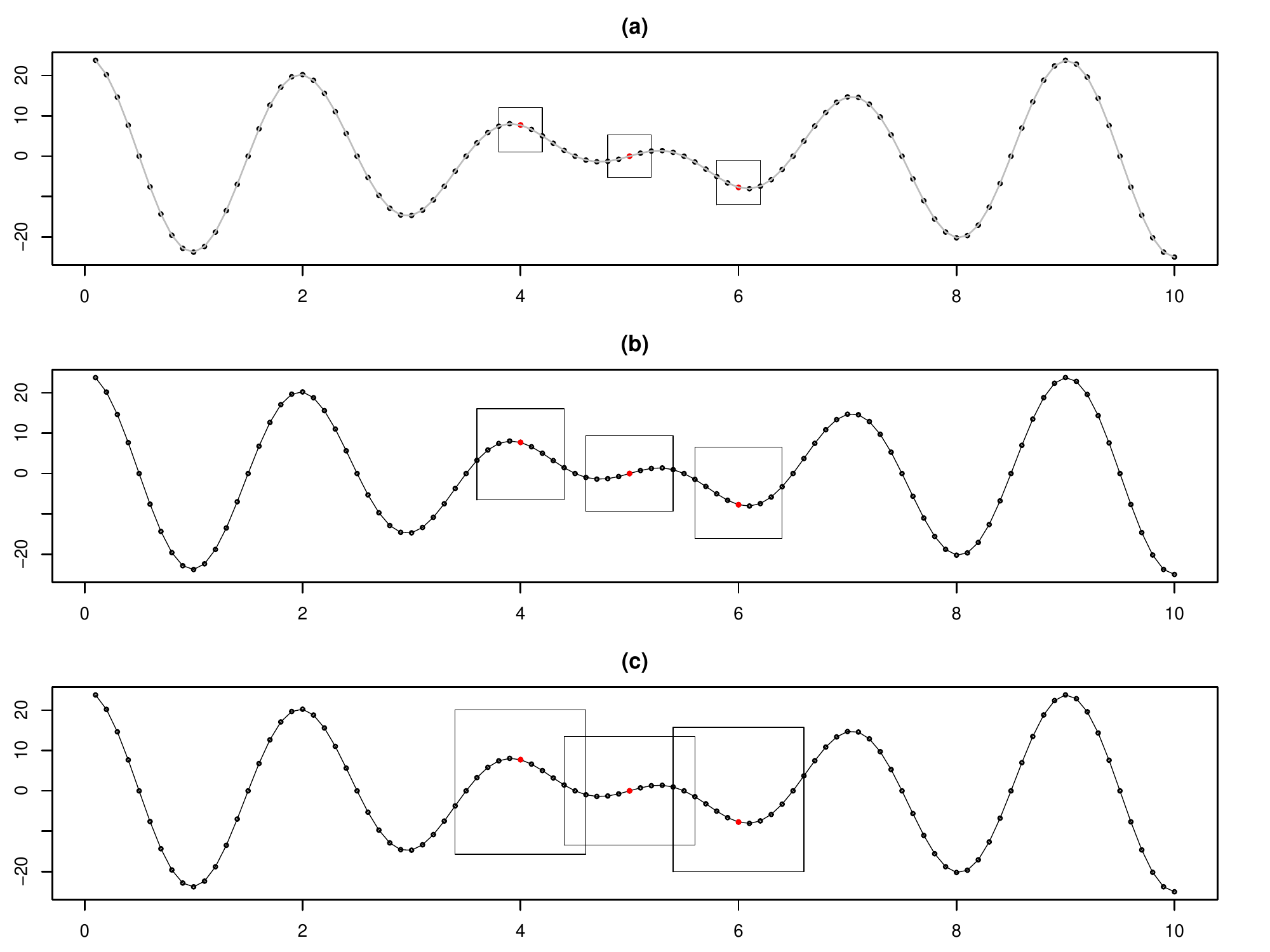}
\vspace{-3mm}
\caption{Patches with rectangle shape $P_{t_i}^{\tau}(X_t)$ of signal $X_t=25\cos(0.1\pi t)\cos(\pi t)$ at center points of the patches, $t_i=4,5,6$. (a) $\tau=4$, (b) $\tau=8$ and (c) $\tau=12$.} 
\label{patch}
\end{figure}

\begin{figure}[!t]
\centering
\includegraphics[width=0.75\linewidth]{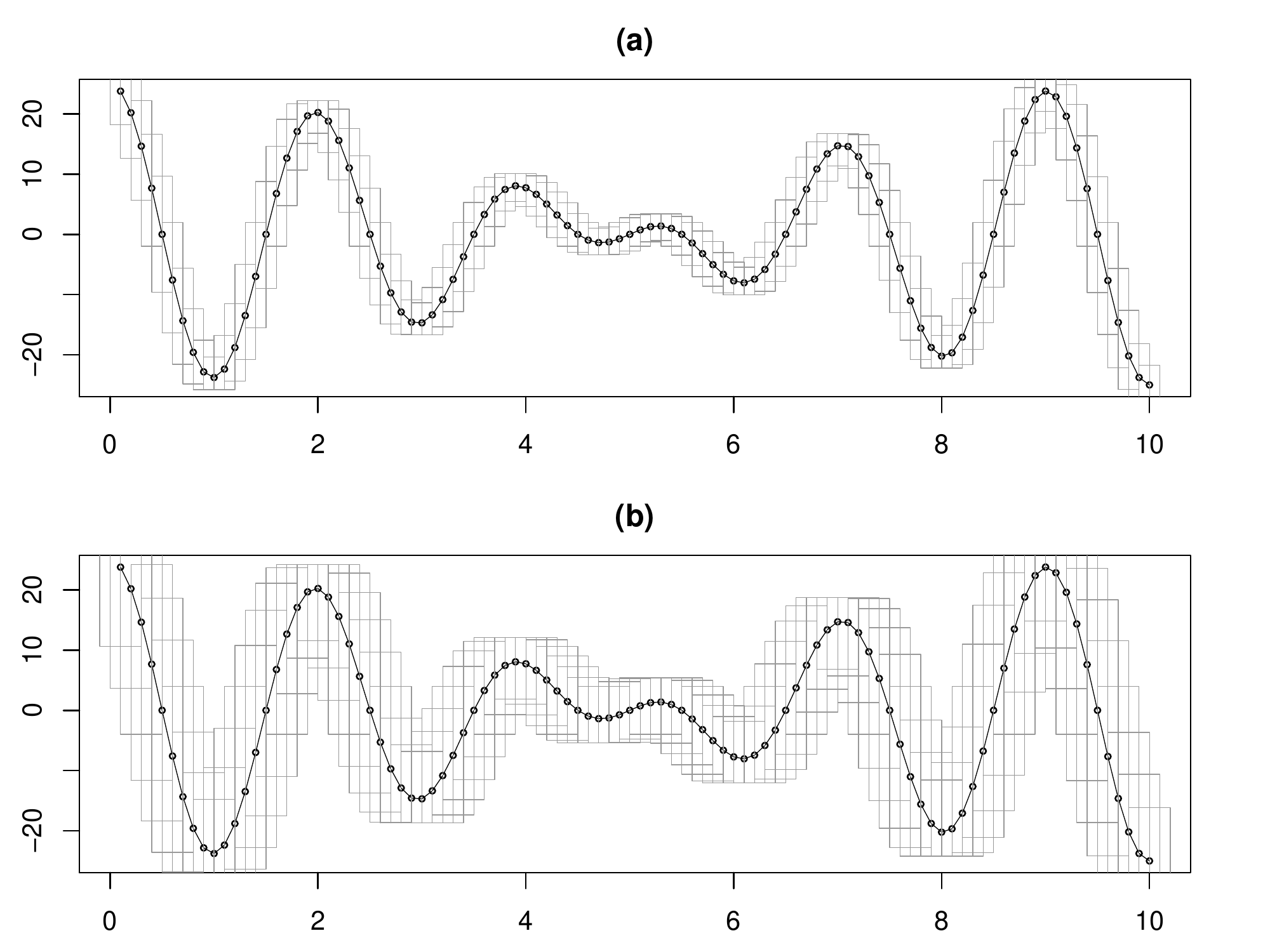}
\vspace{-3mm}
\caption{Patches with rectangle shape $P_{t_i}^{\tau}(X_t)$ of signal $X_t=25\cos(0.1\pi t)\cos(\pi t)$, $0\le t\le 10$. (a) $\tau=2$ and (b) $\tau=4$} 
\label{patchentire}
\end{figure}

For illustration of the patch process, we consider a deterministic signal $X_t=25\cos(0.1\pi t)\cos(\pi t)$, $0\le t\le 10$. We then obtain a  sequence $\{X_{t_i}\}_{i=1}^{100}$ with $t_i=iT$ and sampling rate $T=1/10$ from the continuous signal $X_t$.  Figure~\ref{patch} shows rectangle patches $P_{t_i}^{\tau}(X_t)$ of the sequence $\{X_{t_i}\}$ that are respectively performed at certain time points $t_i=4,5,6$ marked by red dots. We consider three different size parameters $\tau=4,8,12$ for generating patches, and obtain a multiscale patch $MP_t^{\mathcal{T}}(X_t)$ by combining the three patches in Figure~\ref{patch}(a)--(c). Figure~\ref{patchentire} shows patches in the entire time domain with the parameters $\tau=2$ and 4, respectively.  

From Figures~\ref{patch},~\ref{patchentire} and the definitions, the patch at a particular time point $t$ is an object that contains multiple observations around the time point $t$; thus, for further statistical analysis, it is necessary to use some statistics that summarize informations of $P_t^{\tau}(X_t)$ and $MP_t^{\mathcal{T}}(X_t)$. For this purpose, we consider a measure $\mathcal{K}(P_t^{\tau}(X_t))$ that produces a single statistic at time point $t$. Some possible measures $\mathcal{K}(\cdot)$ are two-fold: one is for central tendency and the other is for dispersion. As for measures for central tendency, in this study, we present the following two measures. Suppose that we obtain the patch $P_t^{\tau}(X_t)$ for a fixed $\tau$. 
\begin{itemize}
\item $\mbox{Ave}_t^\tau(X_t)= \mbox{average}(\{X_{t_i}\})$, where $\{X_{t_i}\}$ denote observations in the patch $P_t^{\tau}(X_t)$. 

\item $M_t^\tau(X_t)=\frac{1}{2}(L_t^\tau(X_t)+U_t^\tau(X_t))$, where $L_t^\tau(X_t)$ and $U_t^\tau(X_t)$ denote lower and upper envelopes of the patch $P_t^{\tau}(X_t)$, respectively. $M_t^\tau(X_t)$ is called mean envelope. The lower envelope $L_t^\tau(X_t)$ and upper envelope $U_t^\tau(X_t)$ of the rectangle patch are 
$$
L_t^\tau(X_t) = \min_{k\in[-\tau/2,\tau/2]}\{X_{t+k}\}-0.5 \gamma \tau, \quad
U_t^\tau(X_t) = \max_{k\in[-\tau/2,\tau/2]}\{X_{t+k}\}+0.5 \gamma \tau.
$$
The lower envelope $L_t^\tau(X_t)$ and upper envelope $U_t^\tau(X_t)$ of the oval patch are 
$$
L_t^\tau(X_t) = \min_{k\in[-\tau/2,\tau/2]} \{X_{t+k} - \gamma\sqrt{\tau^2/4-k^2} \}, \quad
U_t^\tau(X_t) = \max_{k\in[-\tau/2,\tau/2]} \{X_{t+k} + \gamma\sqrt{\tau^2/4-k^2} \}.
$$
\end{itemize}
For dispersion measure, we consider the followings
\begin{itemize}
\item $\mbox{sd}_\tau(X_t)= \sqrt{\mbox{Var}(\{X_{t_i}\})}$. 

\item $R_t^\tau(X_t)=U_t^\tau(X_t)-L_t^\tau(X_t)$. 
\end{itemize}

\begin{figure}[!t]
\centering
\includegraphics[width=0.82\linewidth]{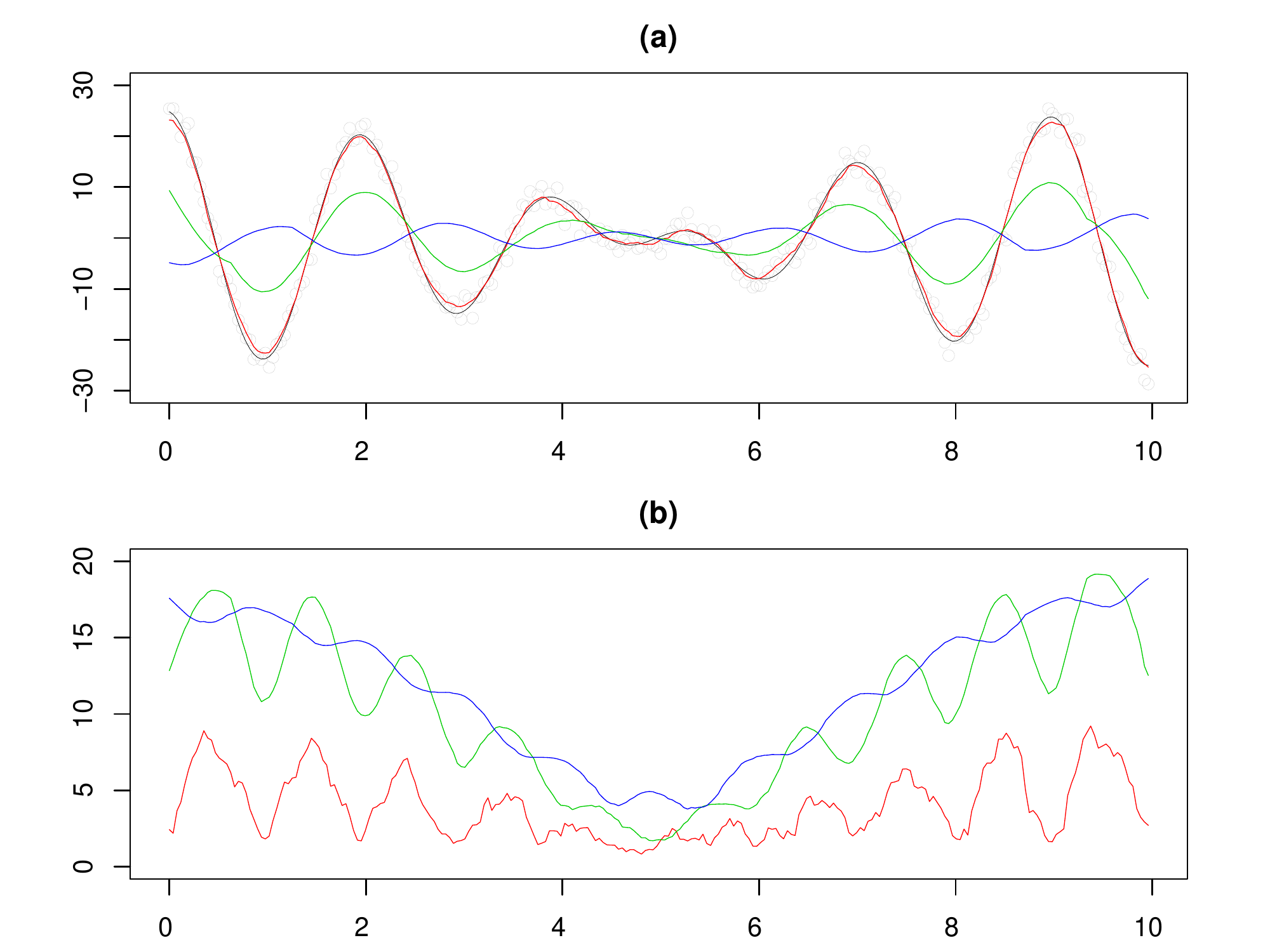}
\vspace{-3mm}
\caption{(a) Noisy sequence (open circles), true function (black line), $\mbox{Ave}_t^\tau(X_t)$ by $\tau=8$ (red line), $\tau=32$ (green line), $\tau=64$ (blue line), and (b) $\mbox{sd}_t^\tau(X_t)$ by $\tau=8$ (red line), $\tau=32$ (green line), $\tau=64$ (blue line).}
\label{patchstat}
\end{figure}

Figure~\ref{patchstat} shows $\mbox{Ave}_t^\tau(X_t)$ and $\mbox{sd}_t^\tau(X_t)$ with size parameters $\tau=8,32,64$ for a noisy signal  $X_t=25\cos(0.1\pi t)\cos(\pi t)+\sigma\epsilon_t$, where $\sigma=1.8$ and $\epsilon_t$ denote i.i.d. standard Gaussian random variables. As the value of size parameter $\tau$ increases, a central measure $\mbox{Ave}_t^\tau(X_t)$ is getting smoother with representing the global trend of the observations. On the other hand, the values of $\mbox{sd}_t^\tau(X_t)$ at both boundaries are large, compared to those at center over all $\tau$'s, and $\mbox{sd}_t^\tau(X_t)$ becomes larger as $\tau$ increases since a large patch contains more observations. Further, it seems that $\mbox{sd}_t^\tau(X_t)$ by $\tau=64$ is capable of identifying the temporal variability of the signal well. 

\subsection{Ensemble Patch Transform}
To improve the temporal resolution of the patch and its measures, we introduce an ensemble patch process of a real-valued univariate process $(X_t)_{t}$. 
\begin{dfn}
Let $(X_t)_t$ be a real-valued univariate process. Let $\mathcal{T}$ denote a set of size parameters for the patch. For any $\tau\in \mathcal{T}$, the $\ell$th shifted patch at time point $t$ is defined as $P_{t+\ell}^{\tau}(X_t)$, $\ell\in[-\tau/2,\tau/2]$. Then, a fixed $\tau\in \mathcal{T}$, a collection of all possible shifted patches at time point $t$ is defined as ensemble patch,
\[
EP_t^\tau(X_t):= \left\{P_{t+\ell}^{\tau}(X_t) : \ell\in[-\tau/2,\tau/2] \right\}. 
\]
Finally, as a dictionary, the multiscale ensemble patch process is defined the sequence of all sets of $EP_t^\tau(X_t)$ over various $\tau$'s as
\[
MEP_t^{\mathcal{T}}(X_t):= \left\{ EP_t^\tau(X_t) : \tau\in\mathcal{T} \right\}. 
\] 
\end{dfn}

\begin{figure}[!t]
\centering
\includegraphics[width=0.82\linewidth]{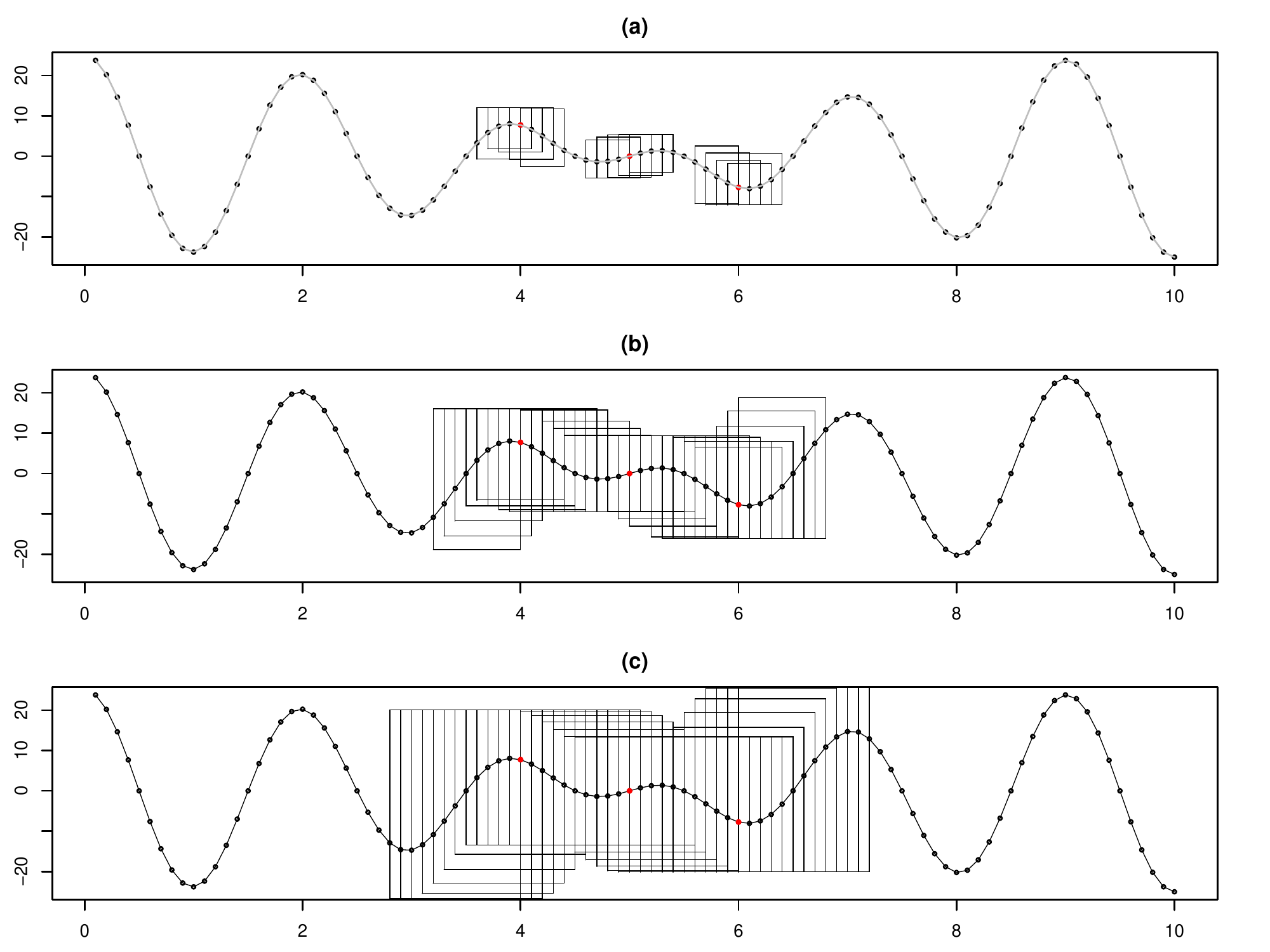}
\vspace{-3mm}
\caption{Ensemble patches with rectangle shape $EP_{t_i}^{\tau}(X_t)$ of signal $X_t=25\cos(0.1\pi t)\cos(\pi t)$ at center points of the patches, $t_i=4,5,6$. (a) $\tau=4$, (b) $ \tau=8$ and (c) $\tau=12$.} 
\label{epatch}
\end{figure}

For the sequence $\{X_{t_i}\}$ in Figure~\ref{patch}, we generate ensemble rectangle patches at the same time points $t_i=4,5,6$ according to size parameters $\tau=4,8,12$, which are displayed in Figure~\ref{epatch}. A multiscale ensemble patch $MEP_t^{\mathcal{T}}(X_t)$ is obtained by combining the three ensemble patches in Figure~\ref{epatch}(a)--(c). 

Similarly, for further data analysis, we need to consider some statistics of ensemble patch $EP_t^\tau(X_t)$. We first consider a measure of each shifted patch $\mathcal{K}(P_{t+\ell}^{\tau}(X_t))$ and then obtain an ensemble measure by averaging  $\mathcal{K}(P_{t+\ell}^{\tau}(X_t))$'s over $\ell$ in $[-\tau/2,\tau/2]$.  More specifically, we obtain the following ensemble measures for central tendency and dispersion: For a fixed $\tau$, suppose that we obtain the  collection of all shifted patches at time point $t$, $EP_t^\tau(X_t)$ of the 
patch $P_t^{\tau}(X_t)$. 
\begin{itemize}
\item $\mbox{EAve}_t^\tau(X_t)= \mbox{average}(\mbox{Ave}_{t+\ell}^{\tau}(X_{t}))$ over $\ell$'s, where $\mbox{Ave}_{t+\ell}^{\tau}(X_{t})$ denotes the simple average of observations in the shifted patch $P_{t+\ell}^{\tau}(X_t)$. 

\item $EM_t^\tau(X_t)=\mbox{average}(M_{t+\ell}^{\tau}(X_t))$ over $\ell$'s, where $M_{t+\ell}^{\tau}(X_t)$ denotes the average of $L_{t+\ell}^{\tau}(X_t)$ and $U_{t+\ell}^{\tau}(X_t)$ that are lower and upper envelopes of the patch $P_{t+\ell}^{\tau}(X_t)$.
\end{itemize}

\begin{itemize}
\item $\mbox{Esd}_t^\tau(X_t)= \mbox{average}(\mbox{sd}_{t+\ell}^{\tau}(X_t))$ over $\ell$'s, where $\mbox{sd}_{t+\ell}^{\tau}(X_t)$ denotes the standard deviation of observations in the shifted patch $P_{t+\ell}^{\tau}(X_t)$. 

\item $ER_t^\tau(X_t)= \mbox{average}(R_{t+\ell}^{\tau}(X_t))$ over $\ell$'s, where $R_{t+\ell}^{\tau}(X_t)=U_{t+\ell}^{\tau}(X_t)-L_{t+\ell}^{\tau}(X_t)$. 
\end{itemize}

\begin{figure}[!t]
\centering
\includegraphics[width=0.82\linewidth]{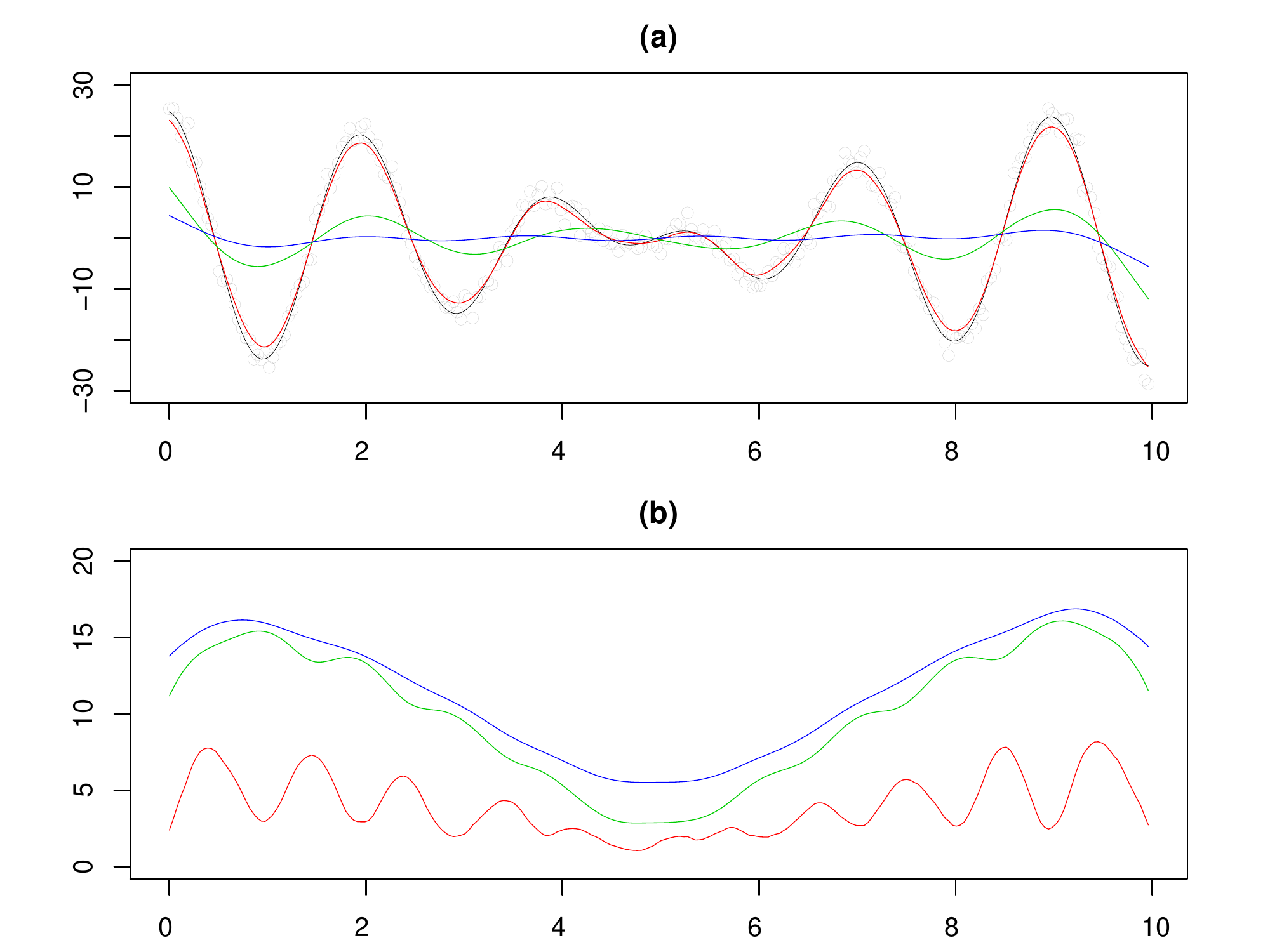}
\vspace{-3mm}
\caption{(a) Noisy sequence (open circles), true function (black line), $\mbox{EAve}_t^\tau(X_t)$ by $\tau=8$ (red line), $\tau=32$ (green line), $\tau=64$ (blue line), and (b) $\mbox{Esd}_t^\tau(X_t)$ by $\tau=8$ (red line), $\tau=32$ (green line), $\tau=64$ (blue line).}
\label{epatchstat}
\end{figure}

We obtain some measures based on ensemble patches of the noisy signal in Figure~\ref{patchstat}, $\mbox{EAve}_t^\tau(X_t)$ and $\mbox{Esd}_t^\tau(X_t)$ with size parameters $\tau=8,32,64$, which are shown in Figure~\ref{epatchstat}. As the value of $\tau$ increases, the central measure $\mbox{EAve}_t^\tau(X_t)$ is getting smoother, and  the dispersion measure $\mbox{Esd}_t^\tau(X_t)$  becomes larger and the values are relatively large at both boundaries. Furthermore, by comparison of the ensemble results with the single patch results in Figure~\ref{patchstat}, we have some observations: (a) The central measure by ensemble patches represents the temporal trend of the underlying function well, compared to that by single patches.  (b) The dispersion measure with large $\tau$ by ensemble patches identifies a local variability of the underlying function efficiently. (c) The temporal resolution of both measures by ensemble patches are much finer than those of single patches. In addition,  ensemble patches are able to obtain various statistics that are adapted for a purpose of data analysis. For example, as an alternative central measure, we can consider median for each patch $P_{t+\ell}^\tau(X_t)$, say Med$_{t+\ell}^{\tau}(X_t)$  and the corresponding mean of Med$_{t+\ell}^{\tau}(X_t)$ over $\ell$, EMed$_t^{\tau}(X_t)$.

We finally remark that thick-pen transformation by Fryzlewicz and Oh (2011) is a special case of the above $MEP_t^{\mathcal{T}}(X_t)$ with $L_{t+\ell}^{\tau}(X_t)$ and $U_{t+\ell}^{\tau}(X_t)$ at the shifting index $\ell=0$.

\subsection{Visualization}
The proposed transform holds inherently multiscale features owing to the parameter $\tau$ that plays a role in controlling the size of patch. That is, the size parameter of patch acts as the scale parameter of multiscale features. Scale-space concept might provide a view-point on visualization of data, which considers a family of representations of data indexed by the scale parameter $\tau$ instead of the conventional dot-connected plot. 

Here we present multiscale visualization techniques based on patch transform $P_t^\tau(X_t)$ and ensemble patch transform $EP_t^\tau(X_t)$, termed C-map (centrality map) and D-map (dispersion map). C-map and D-map are time-scale representation of two-dimensional array (matrix) with the $(t,\tau)$ element as $\mbox{Ave}_t^\tau(X_t)$ (or $\mbox{EAve}_t^\tau(X_t)$)  and $\mbox{sd}_t^\tau(X_t)$ (or $\mbox{Esd}_t^\tau(X_t)$), where  $\mbox{Ave}_t^\tau(X_t)$ and $\mbox{sd}_t^\tau(X_t)$ represent centrality and dispersion measures of data in patch, respectively. 

\begin{figure}[!t]
\centering
\includegraphics[width=0.75\linewidth]{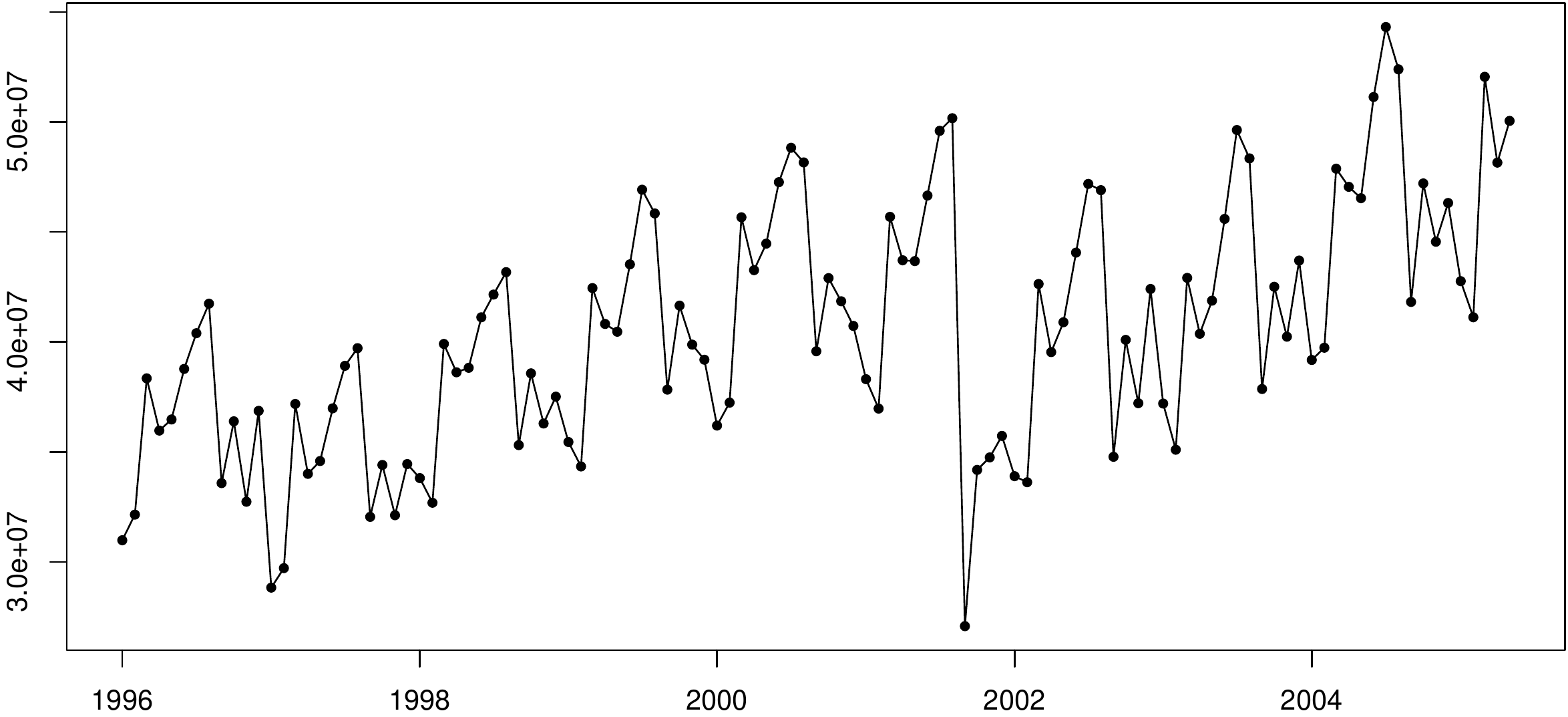}
\vspace{-3mm}
\caption{Airmile data}
\label{airmile}
\end{figure}

For illustration of the maps, we use monthly airline passenger-mile data in the United States during the period from January 1996 to May 2005 (Cryer and Chan, 2008), which is shown in Figure~\ref{airmile}. The data show a strong seasonality with holidays effects, and these are increasing linearly overall with an intervention in September 2001 and several months thereafter due to the terrorist acts on September 11, 2001. To understand the time-varying structure of air passenger-miles, we construct  C-map and D-map of the data based on the measures $\mbox{Ave}_t^\tau(X_t)$ (or $\mbox{EAve}_t^\tau(X_t)$)  and $\mbox{sd}_t^\tau(X_t)$ (or $\mbox{Esd}_t^\tau(X_t)$) shown in Figure~\ref{map}.  From the centrality map of panels (a) and (c), we observe the temporal patterns of the data over the domain, which are increasing with a sudden drop around September 2001. As expected, ensemble C-map in panel (c) based on $\mbox{EAve}_t^\tau(X_t)$ holds an enhanced temporal resolution, compared to C-map of $\mbox{Ave}_t^\tau(X_t)$ in panel (a). From two dispersion D-maps in panels (b) and (d) based on $\mbox{sd}_t^\tau(X_t)$ and $\mbox{Esd}_t^\tau(X_t)$, it is capable of identifying the dependent structure of the data evolving on time; thus, the sudden drop of air passenger-miles near September 2001 can be easily detected.   

\begin{figure}[!t] 
\centering
\includegraphics[width=0.8\linewidth]{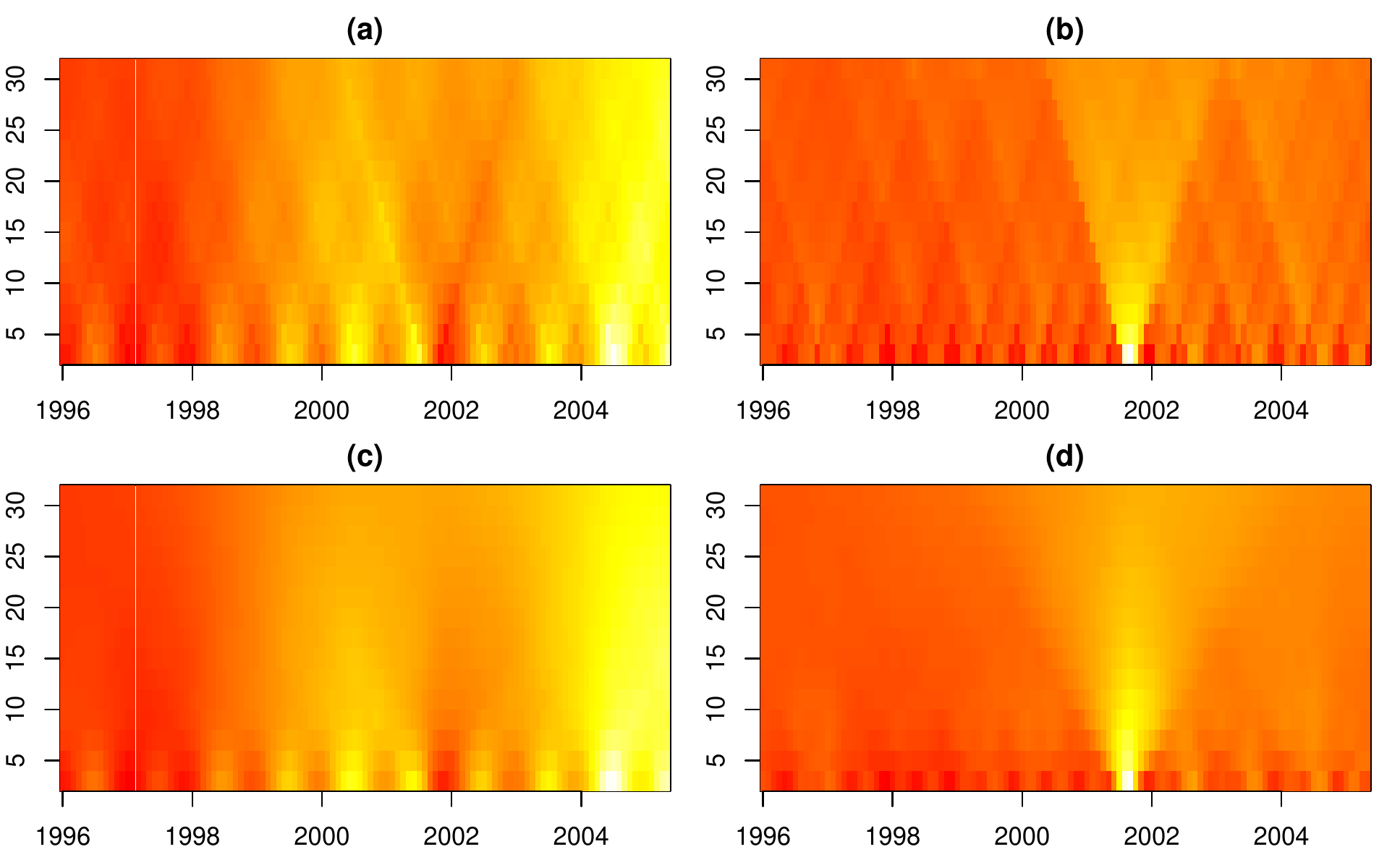}
\vspace{-3mm}
\caption{C-maps and D-maps of airmile data. (a) C-map by $\mbox{Ave}_t^\tau(X_t)$, (b) D-map by $\mbox{sd}_t^\tau(X_t)$, (c) C-map by $\mbox{EAve}_t^\tau(X_t)$, and (d) D-map by $\mbox{Esd}_t^\tau(X_t)$.}
\label{map}
\end{figure}

Moreover, we construct derivatives of C-map and D-map with respect to time $t$ and scale $\tau$, termed DC-map and DD-map that are defined as two-dimensional array with the ($t,\tau$) element as $\Delta\mbox{EAve}_t^\tau(X_t)/\Delta t$ ($\Delta\mbox{EAve}_t^\tau(X_t)/\Delta \tau$) and $\Delta\mbox{Esd}_t^\tau(X_t)/\Delta t$ ($\Delta\mbox{Esd}_t^\tau(X_t)/\Delta \tau$). Figure~\ref{dmap} shows DC-maps and DD-maps of the air passenger-mile data with respect to time $t$ and scale $\tau$. Especially, the DC-map and DD-map with respect to time in panels (a) and (c) detect the intervention clearly.  

\begin{figure}[!t]
\centering
\includegraphics[width=0.8\linewidth]{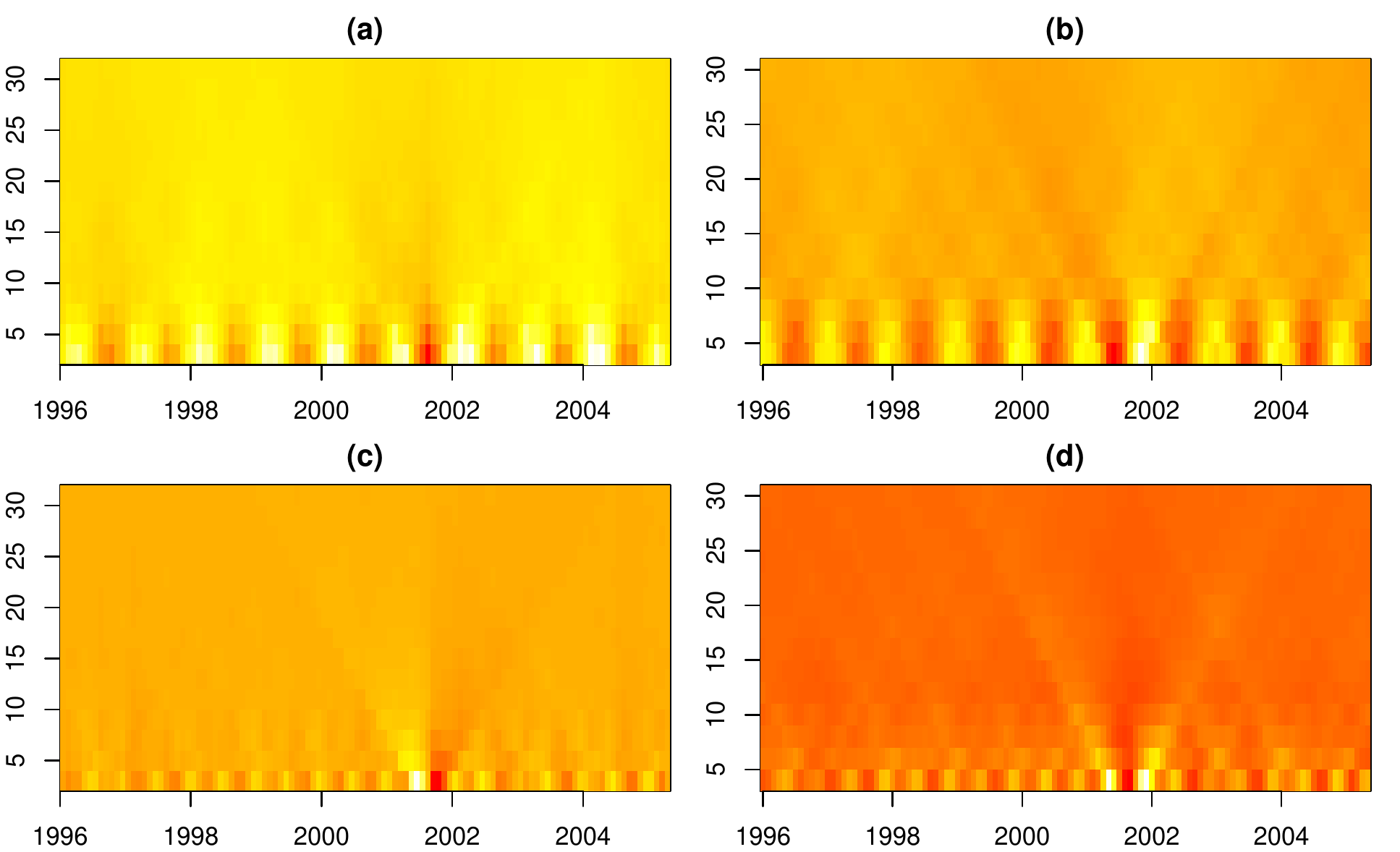}
\vspace{-3mm}
\caption{DC-maps and DD-maps of the air passenger-mile data. (a) DC-map by $\Delta\mbox{EAve}_t^\tau(X_t)/\Delta t$, (b) DC-map by $\Delta\mbox{EAve}_t^\tau(X_t)/\Delta \tau$, (c) DD-map by $\Delta\mbox{Esd}_t^\tau(X_t)/\Delta t$, and (d) DD-map by $\Delta\mbox{Esd}_t^\tau(X_t)/\Delta \tau$.}
\label{dmap}
\end{figure} 

\section{Ensemble Patch Filtering and Decomposition} \label{sec.proposal} 

\subsection{Ensemble Patch Filtering}
When a signal consists of several components with their own frequencies, ensemble patch transformation can be utilized as a low-pass or a high-pass filter. Figure~\ref{ptensemble} illustrates filtering process of the ensemble mean envelope. The top panel shows a sinusoidal signal $X_t=\cos(50 \pi t)+\cos(10 \pi t)+2t \ (t \in [0.35, 0.55])$, and depicts three shifted rectangle patches covering a point $X_t$ at $t=0.45$ of open circle. Each  $\ell$th shifted patch produces upper and lower envelopes $U_{t+\ell}^{\tau}(X_t)$ and $L_{t+\ell}^{\tau}(X_t)$ at time point $t$. The black dots denote the mean envelope $M_{t+\ell}^{\tau}(X_t)$ at time point $t=0.45$ for the shifted patches a, b and c. Similarly, with shifting the patch over the entire time domain, we construct a mean envelope for each shifted patch. The bottom panel of Figure~\ref{ptensemble} shows three mean envelopes (dotted line), respectively.  Furthermore, for improvement of the stability, we take an ensemble average of three mean envelopes, which results in the ensemble mean envelope marked by solid line.  It seems that the ensemble mean envelope represents a lower frequency component of the signal. We note that, although we use only three shifted mean envelops for illustration purpose, the possible number of shifted mean envelopes for a given point is the same as the size parameter $\tau$ of patch generally. The ensemble mean envelope might provide more stable result.

\begin{figure}[!t]
\centerfig{0.85 \columnwidth}{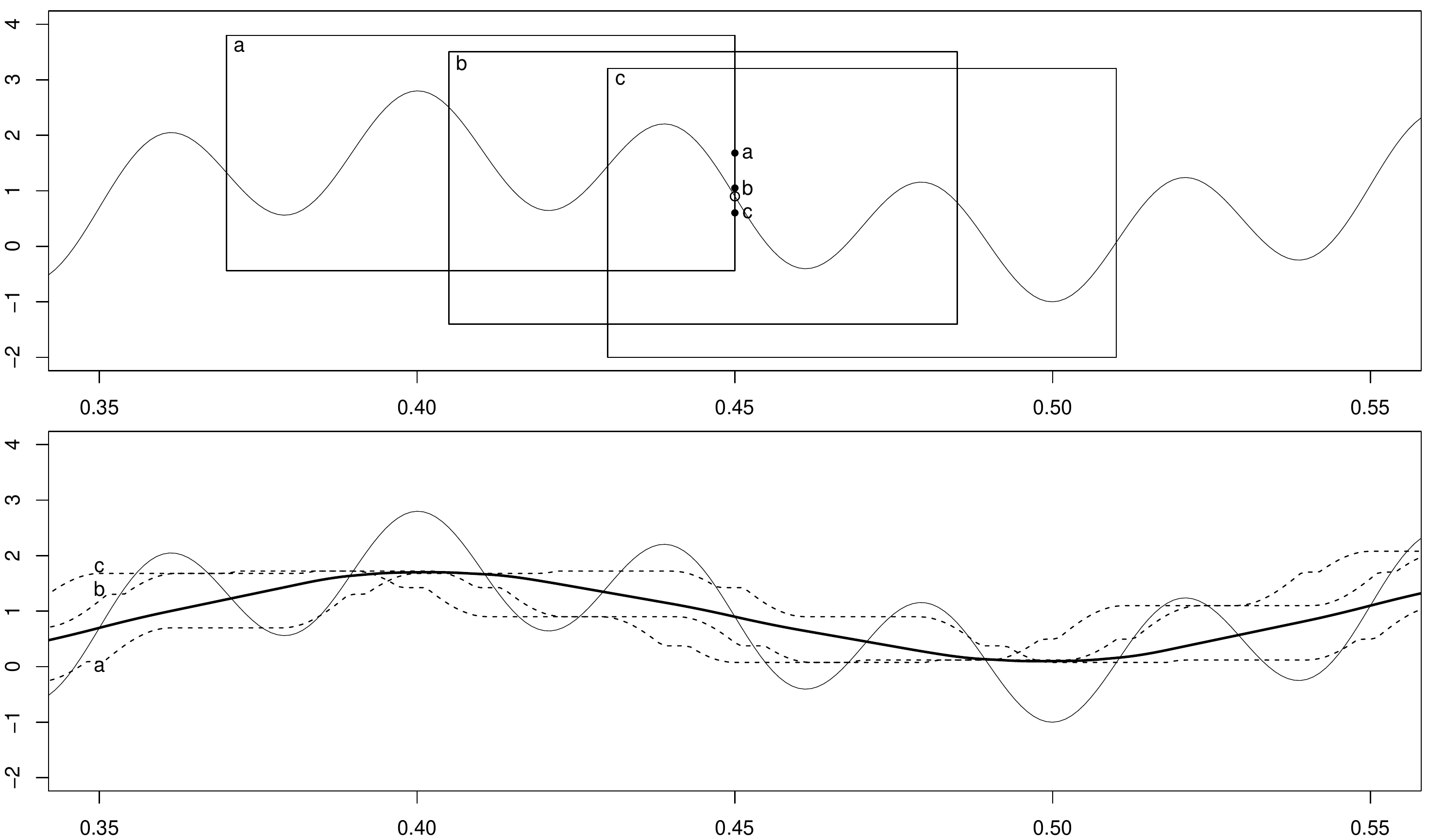}
\caption{Ensemble mean envelope for a signal $X_t$.}
\label{ptensemble}
\end{figure}

For demonstrating a utility of this ensemble approach, we consider a synthetic example. Figure~\ref{ptidea} shows a signal $X_t=\cos(50 \pi t)+\cos(10 \pi t)+2t~ (t \in [0, 1])$ in white color and its ensemble patch transformation of rectangle patch with size parameters $\tau=$ 20, 40, 80, 120, 200 and 240, respectively. The lower and upper envelopes $EL_{t}^{\tau}(X_t)$, $EU_{t}^{\tau}(X_t)$ and mean envelope $EM_t^\tau(X_t)$ are obtained by the ensemble approach. The area covered by two envelopes is colored in gray, and mean envelope is denoted by solid lines.
\begin{figure}[!t]
\centerfig{0.95  \columnwidth}{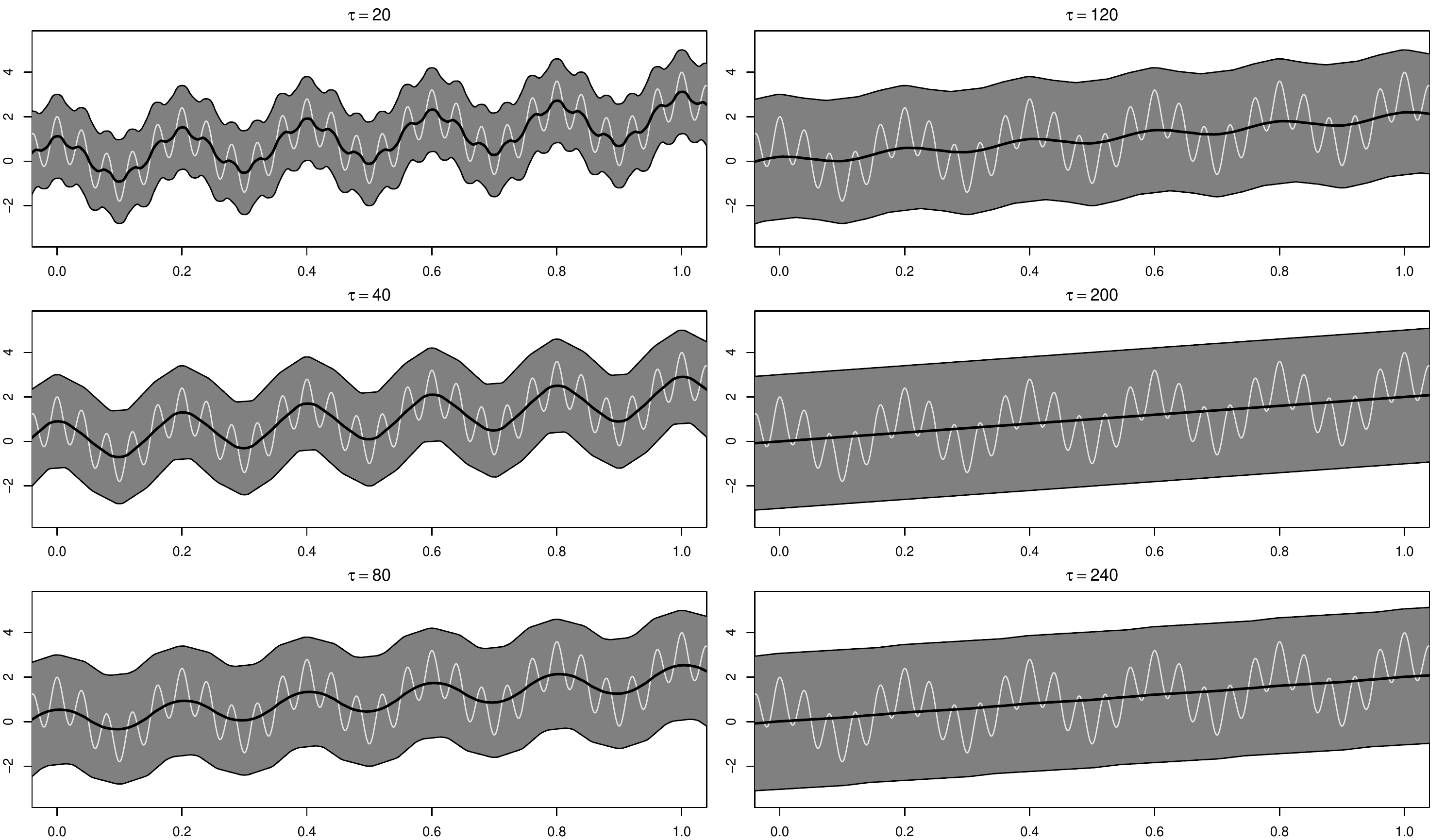}
\caption{ $EM_t^\tau(X_t)$'s of signal $X_t$ according to various $\tau$'s.}
\label{ptidea}
\end{figure}
We observe that with the size parameter $\tau=40$, the ensemble mean envelope suppresses a high-frequency component $\cos(50 \pi t)$. When the size parameter $\tau$ is larger than $200$, both the oscillating patterns of components $\cos(50 \pi t)$ and $\cos(10 \pi t)$ are painted over by patch transformation. As the size parameter $\tau$ is larger, the ensemble mean envelope suppresses the oscillating local pattern, and at the same time represents the lower-frequency pattern. The ensemble mean envelope removes the frequency pattern whose period is less than $\tau$. By controlling the size parameter, mean envelope $EM_t^\tau(X_t)$ of the ensemble patch transformation is implemented as low-pass filter or high-pass filter.

In addition, we perform the same experiment with measure $\mbox{EAve}_t^\tau(X_t)$ that is average of $\mbox{Ave}_{t+\ell}^{\tau}(X_t)$ obtained over ensemble patches. The results  $\mbox{EAve}_t^\tau(X_t)$ (solid line) with different $\tau=$ 20, 40, 80, 120, 200 and 240 are displayed in Figure~\ref{ptidea2}. As one can see, the results are almost identical to those of $EM_t^\tau(X_t)$. 

\begin{figure}[!h]
\centerfig{0.95  \columnwidth}{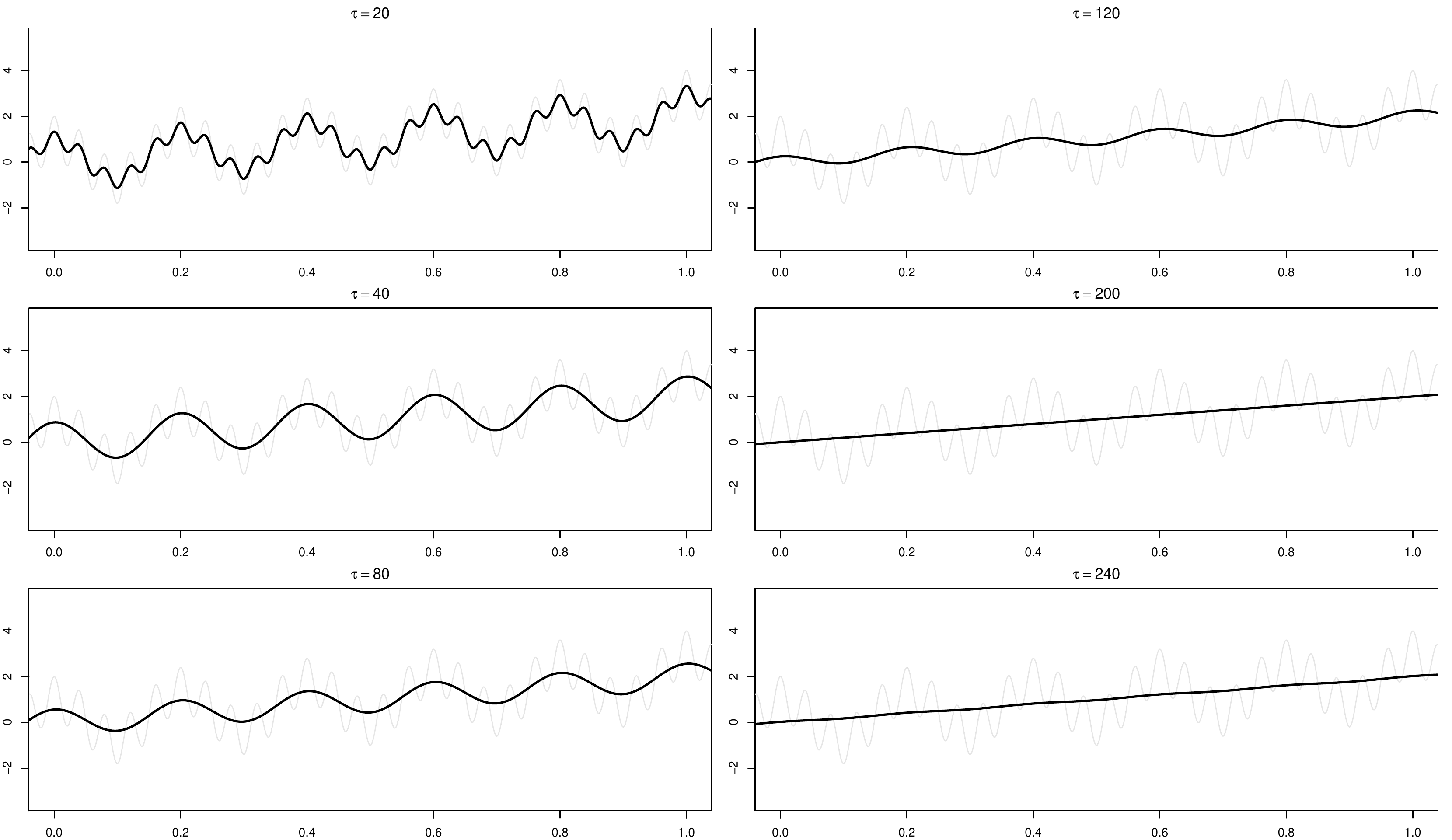}
\caption{ $\mbox{EAve}_t^\tau(X_t)$'s of signal $X_t$ according to various $\tau$'s.}
\label{ptidea2}
\end{figure}

\comment{
The filter effect of ensemble patch transformation is supported by  the variation diminishing property of the mean envelope $M_{\tau}$.
The variation diminishing property
of $U_{\tau}$ and  $L_{\tau}$ is proved in Fryzlewicz and Oh \cite{Fryzlewicz2011}.

\begin{prop} \label{diminishing} Variation diminishing property. Let $X_t \ (t=1 , \ldots, T)$ be a signal.
For the set of size parameter $\tau$, $\mathcal{T} = \{1, 2, \ldots, \}$. 
The total variation of the mean envelope $||M_{\tau}||_{TV}$ is  non-increasing functions of $\tau$,
where the total variation of a sequence $f_t \ (t=1, \ldots, T)$ is defined by
$$
||f||_{TV} = \sum_{t=2}^T |f_t - f_{t-1}|.
$$
\end{prop}
Proposition~\ref{diminishing} supports that the size parameter $\tau$ has a role as ``filtering" or ``smoothing" parameter.
When the size parameter $\tau$ is getting larger, the detailed pattern of a signal is buried by the ensemble patch transformation, and at the same time overall pattern will be exaggerated. As the upper and lower boundaries well represent this structure, the mean of the upper and lower boundaries also does. 
}
\subsection{Decomposition by Ensemble Patch Filtering}

By adapting the above notion of filter, we would like to decompose a signal into a high-frequency component and a low-frequency residue component. 
We consider a signal
$
X_t = \cos(90 \pi t) + \cos(10 \pi t), \ t \in [0,1],
$
shown in Figure~\ref{signal1}. 
\begin{figure}[!t] 
\centerfig{0.85 \columnwidth}{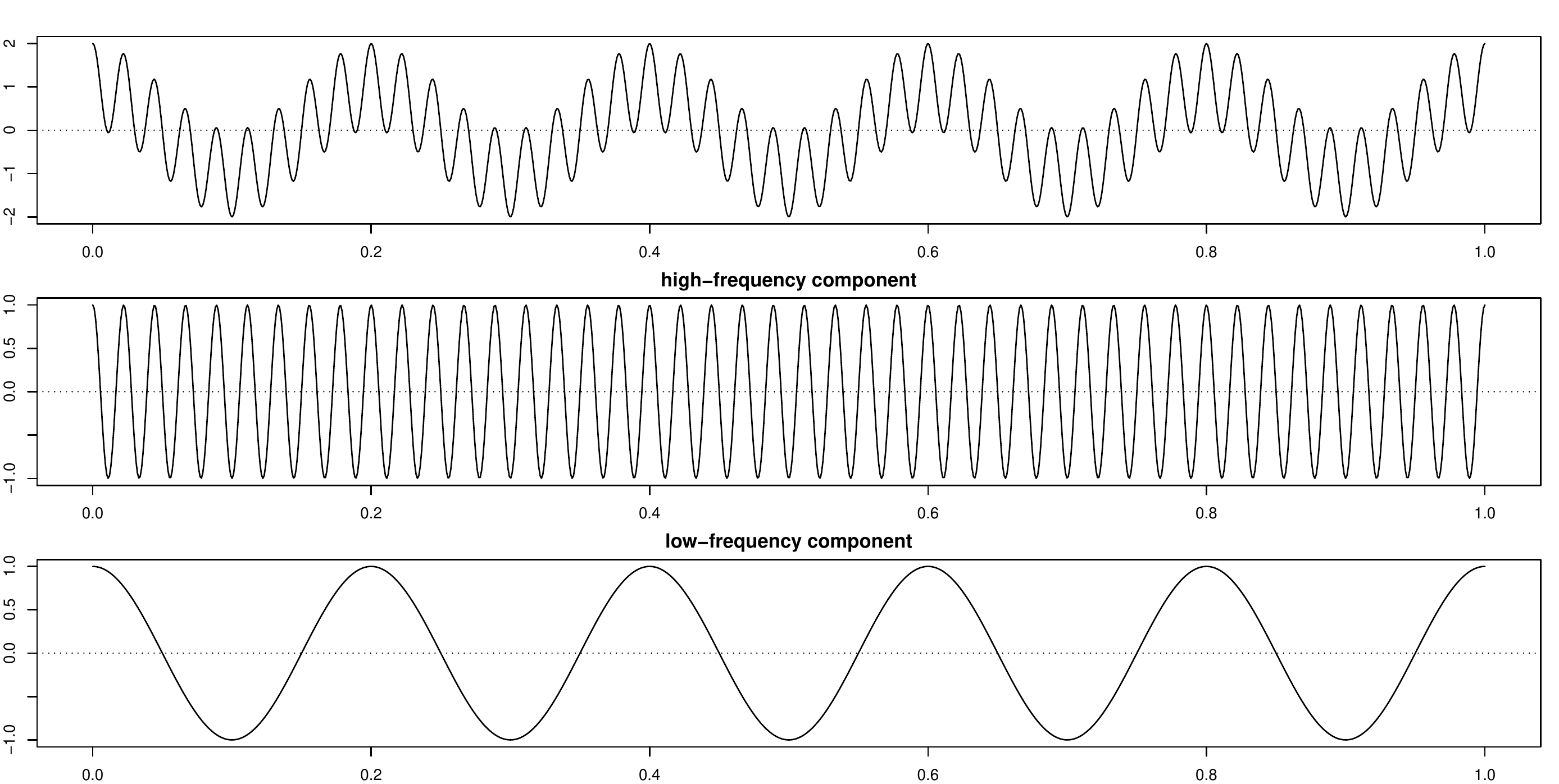}
\caption{A signal $X_t=\cos(90 \pi t) + \cos(10 \pi t)$ and its two components.}
\label{signal1}
\end{figure}
 A snapshot of decomposition procedure by ensemble patch filtering is displayed in Figure~\ref{ptdecom}. From top to down and left to right panels, the first panel illustrates a low-frequency mode, say, $LF_1$ that is an ensemble mean envelope of $X_t$ obtained by $EP_t^\tau(X_t)$ for a given $\tau$, and the corresponding high-frequency mode $HF_1 =X-LF_1$ in the next panel. As one can see, there still exists apparent low-frequency mode in $HF_1$. The third panel shows an ensemble mean envelope of $HF_1$, say, $LF_2$ which seemingly identifies the low-frequency mode of $HF_1$ in the second panel. Now a new high-frequency mode $HF_2 =HF_1-LF_2=X-LF_1-LF_2$ is obtained. In the next iteration, a further ensemble mean envelope of $HF_2$, say, $LF_3$ is almost constant; hence, the corresponding high-frequency mode $HF_3=HF_2-LF_2=X-LF_1-LF_2-LF_3$ represents the true high-frequency component well. Thus, an iterative procedure is required. We note that this iterative process is along with the line of the sifting process of EMD.

\begin{figure}[!t]
\centerfig{0.95\columnwidth}{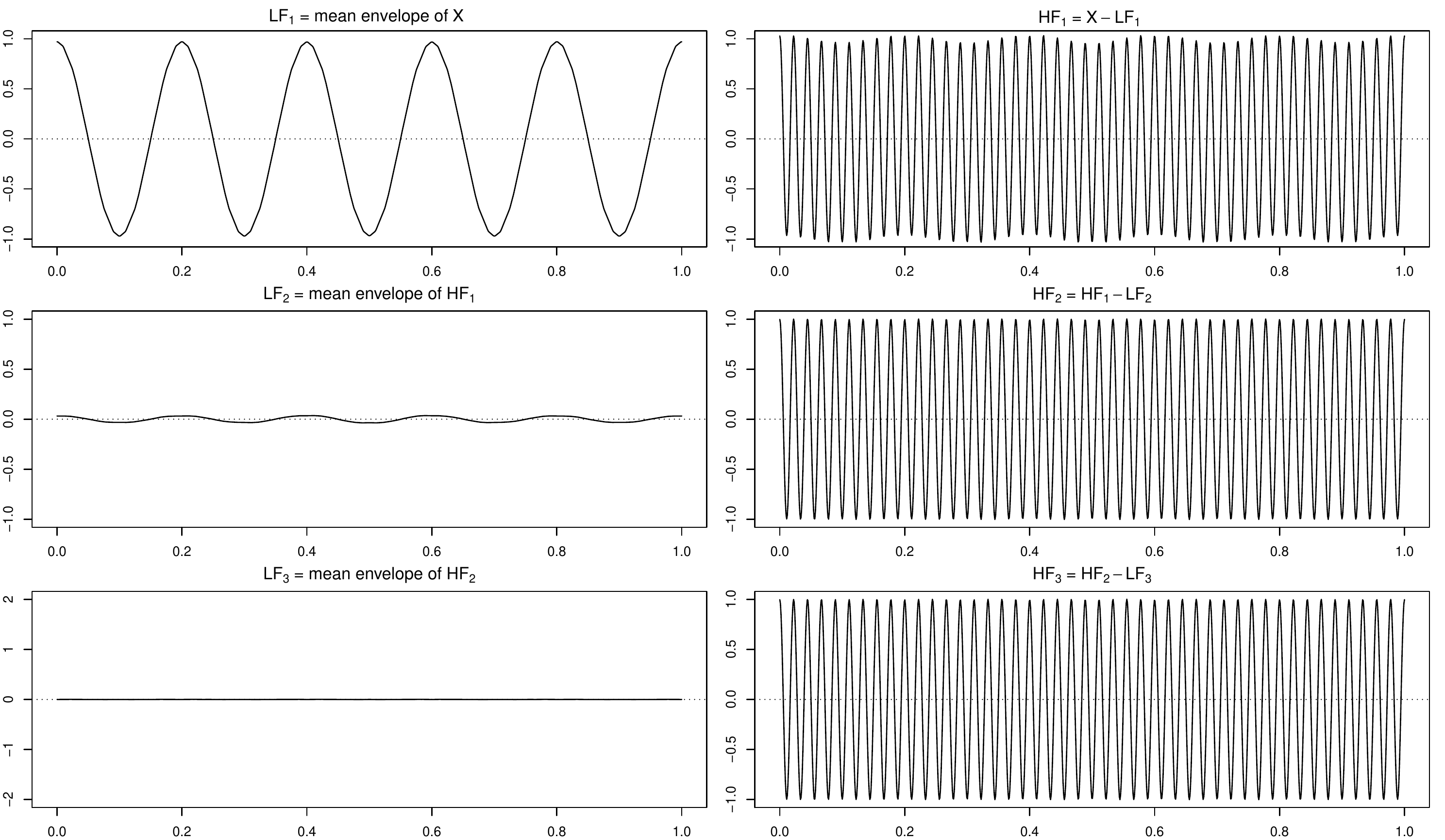}
\caption{Iterative decomposition procedure.}
\label{ptdecom}
\end{figure}

From the above discussion, we propose a practical decomposition algorithm based on ensemble patch filtering. Let $\mathcal{G}_t^\tau(X_t)$ be a generic central measure of $\{P^\tau_{t+\ell}(X_t)\}_\ell$, where $P^\tau_{t+\ell}(X_t)$ is the $\ell$th shifted patch at time $t$ for a given $\tau$.  Suppose that a signal $X_t$ consists of a high-frequency component $h_t$ and a low-frequency component $g_t$ as  $X_t=h_t+g_t$.   
\begin{itemize}
\item[1.] Obtain an initial component $\hat{h}_t^{(0)}=X_t-\mathcal{G}_t^\tau\big(X_t\big)$.  
	
\item[2.] Iterate, until convergence, the following step for $k=0,1,\ldots,$ 
\[
\hat{h}_t^{(k+1)}=\hat{h}_t^{(k)}-\mathcal{G}_t^\tau\big(\hat{h}_t^{(k)}\big).
\]
\item[3.] Take the converged estimate as the extracted component for $h_t$.  
\end{itemize}
We have some remarks regarding the aforementioned algorithm. (a) Choice of $\mathcal{G}_t^\tau$: It is feasible to use various choices of $\mathcal{G}_t^\tau$ including some central measures introduced in Section 2.2, which is the main benefit of utilizing ensemble patch transformation. To be specific, $\mbox{EAve}_t^\tau$ or $\mbox{EM}_t^\tau$ can be used for $\mathcal{G}_t^\tau$. (b)   Choice of $\tau$: The size parameter $\tau$ corresponds to a period in time domain. Thus, the parameter $\tau$ plays a crucial role in quality of the extracted low-frequency component. Selection method of $\tau$ will be discussed later.  

We now discuss a convergence property of the above algorithm under some conditions.  
\begin{thm}\label{thm1}
Suppose that we observe a real-valued sequence $(X_t)_t$ from a model $X_t=h_t+g_t$, where $\{h_t\}$, $t \in \mathbb{R}$ is a periodic sequence with $h_{t}=h_{t+\tau_0}$ and $\int_0^{\tau_0}h_t=0$, and $g_t$ is a signal such that $|G(\omega)|=0 , ~\omega \in \big\{\omega:\omega=\frac{2\pi k}{\tau_0}\pm 2n\pi,~\textup{for all } k=1,\dots,\tau_0-1~\textup{and } n\in \mathbb{N}\big\}$ and  $G(\omega)$ denotes Fourier transform of $g_t$. Then, for a given $\tau_0$, we obtain that $\hat{h}_t^{(k)} \to h_t ~\textup{as } k \to \infty$, where $\hat{h}_t^{(k+1)}=\hat{h}_t^{(k)}-\textup{EAve}_t^{\tau_0}\big(\hat{h}_t^{(k)}\big)$,  $\hat{h}_t^{(0)}=X_t-\textup{EAve}_t^{\tau_0}(X_t)$.
\end{thm}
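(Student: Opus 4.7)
The plan is to recognize the iteration as a linear contraction in the Fourier domain. Unfolding the double average,
$\textup{EAve}_t^{\tau_0}(f) = \tau_0^{-2}\int_{-\tau_0/2}^{\tau_0/2}\int_{-\tau_0/2}^{\tau_0/2} f(t+\ell+k)\,d\ell\,dk$,
and substituting $u=\ell+k$ rewrites this as convolution with the triangular (Bartlett) kernel of width $2\tau_0$. Call the resulting linear operator $T$; its Fourier symbol is $K(\omega) = \textup{sinc}^2(\omega\tau_0/2)$, which equals $1$ at $\omega=0$, vanishes precisely at $\omega = 2\pi m/\tau_0$ for every integer $m \neq 0$, and lies in $[0,1)$ elsewhere. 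Unrolling the recursion yields $\hat h_t^{(k)} = (I-T)^{k+1}X_t$, so everything reduces to tracking how $(I-T)^{k+1}$ acts on the two summands $h_t$ and $g_t$.

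Next I would dispose of the $h$-term outright. Using $h_{t+\tau_0}=h_t$ and $\int_0^{\tau_0} h_s\,ds=0$, the inner average $\int_{-\tau_0/2}^{\tau_0/2} h_{t+\ell+k}\,d\ell$ is an integral of $h$ over a window of length $\tau_0$, which by periodicity equals $\int_0^{\tau_0}h_s\,ds = 0$ regardless of $t$ and $k$. Hence $Th \equiv 0$, so $(I-T)^{k+1} h_t = h_t$ for every $k$, and it suffices to show $(I-T)^{k+1} g_t \to 0$.

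For that I would pass to the Fourier side: the residual has symbol $(1-K(\omega))^{k+1}\,G(\omega)$. The multiplier $(1-K(\omega))^{k+1}$ equals $1$ exactly on the frequency set where $K$ vanishes, which — allowing the $\pm 2n\pi$ shifts reflecting the discretely sampled setting — is precisely the hypothesis set $\{2\pi k/\tau_0 \pm 2n\pi\}$ where by assumption $G$ vanishes. Off this set, $0 \le 1-K(\omega) < 1$, so $(1-K(\omega))^{k+1} G(\omega) \to 0$ pointwise, dominated uniformly in $k$ by $|G(\omega)|$. Dominated convergence applied to the Fourier inversion formula then yields $(I-T)^{k+1} g_t \to 0$, finishing the argument.

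I expect the only real difficulty to lie in pinning down the function class in which this convergence is asserted. The bad frequencies form a set of Lebesgue measure zero, so for a generic integrable signal the pointwise-vanishing hypothesis on $G$ does not, by itself, constrain $g$. The natural reading — consistent with the explicit $\pm 2n\pi$ shifts — is the discrete-time setting in which the spectrum lives on the torus $[-\pi,\pi]$, $K$ becomes the squared Dirichlet/Fej\'er kernel with the same discrete zero set, and the dominated-convergence step over a compact domain is entirely routine. In either reading, the real content sits in the two spectral identities $K(2\pi m/\tau_0) = 0$ (which kills $h$ in one shot) and $|1-K(\omega)|<1$ off the hypothesis set (which contracts the $g$-residual geometrically).
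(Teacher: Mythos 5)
Your argument is essentially the paper's own proof: the paper likewise writes $\textup{EAve}_t^{\tau_0}$ as the double boxcar convolution (i.e.\ a triangular kernel), passes to the Fourier multiplier $\Xi^{\tau_0}(\omega)=\bigl(\tfrac{\sin(\tau_0\omega/2)}{\tau_0\sin(\omega/2)}\bigr)^2$, and uses $0<1-\Xi^{\tau_0}(\omega)<1$ off the set $\bigl\{\tfrac{2\pi k}{\tau_0}\pm 2n\pi\bigr\}$ together with $|G(\omega)|=0$ on it to conclude $\bigl(1-\Xi^{\tau_0}(\omega)\bigr)^k G(\omega)\to 0$. Your extra touches---the one-shot time-domain observation that $\textup{EAve}_t^{\tau_0}(h_t)\equiv 0$ (the paper leaves the preservation of $h$ implicit in the multiplier equalling $1$ at the harmonics) and the dominated-convergence step back to the time domain---only make explicit what the paper glosses, so the route is the same.
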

\begin{proof}
$\textup{EAve}_t^{\tau_0}(X_t)$ can be expressed as 
\[
\textup{EAve}_t^{\tau_0}(X_t)=\phi^{\tau_0}_t * \phi^{\tau_0}_t * X_t, 
\]
where $\phi^{\tau_0}_t$ is a rectangular (boxcar) function defined as
\[
\phi^{\tau_0}_t=\left\{
\begin{array}{ll}
\frac{1}{\tau_0}, &~|t|<\tau_0\\
0, &~\mbox{otherwise}.  		
\end{array}
\right.
\]
Let $\xi^{\tau_0}_t=\phi^{\tau_0}_t * \phi^{\tau_0}_t$. Then $\hat{h}_t^{(k)}$ can be expressed as 
\[
\hat{h}_t^{(k)}=(\delta_t-\xi^{\tau_0}_t)^{*k}*X_t, 
\]
where $\delta_t$ denotes Kronecker delta function and $u^{*k}={\underbrace{u*u*\dots*u}}$ denotes convolution power. In addition,  $\Xi^{\tau_0}(\omega)={\cal F}\{\xi^{\tau_0}_t\}$ can be expressed as 
\begin{align*} 
\Xi^{\tau_0}(\omega)=\begin{cases}
1, & \omega=0,\pm 2\pi, \pm 4\pi, \dots \\ 
\left(\frac{\sin(\frac{\tau_0\omega}{2})}{\tau_0\sin(\frac{\omega}{2})}\right)^2,  & \omega \neq 0,\pm 2\pi, \pm 4\pi, \dots. 
\end{cases}
\end{align*}
Thus, it follows that $0<1-\Xi^{\tau_0}(\omega)<1$ for $\omega \notin \Big\{0\pm 2n\pi,\frac{2\pi}{\tau_0}\pm 2n\pi,\dots,\frac{2(\tau_0-1)\pi}{\tau_0}\pm 2n\pi\Big\}$. Furthermore, from the assumption of $|G(\omega)|=0$ for$~\omega \in \Big\{0\pm 2n\pi,\frac{2\pi}{\tau_0}\pm 2n\pi,\dots,\frac{2(\tau_0-1)\pi}{\tau_0}\pm 2n\pi\Big\}$, we conclude that
\[
\big|\big(1-\Xi^{\tau_0}(\omega)\big)^k G(\omega)\big| \to 0 
\]
as $k \to \infty$. 
\end{proof}

We close this section with a comparison of the proposed decomposition procedure with EMD for a better understanding of our procedure. Figure~\ref{ptemd} describes the difference of both procedures, which focuses on the local behavior of a signal
$
X_t = \cos(90 \pi t) + \cos(10 \pi t)
$
 in a particular time domain $t \in [0.5, 0.7]$. The essential step of EMD procedure is identifying the local extrema, and obtaining the upper and lower envelopes by interpolating the local maxima and minima, as shown in the right first panel of Figure~\ref{ptemd}. The corresponding mean envelope represents the local low-frequency mode effectively, and a signal is separated as residue and high-frequency mode by repeatedly removing lower frequency mode. From the results in the right second and third panels of Figure~\ref{ptemd}, it is necessary that the local extrema represent the local behavior of high-frequency component of a signal properly. In the case that the local behavior of high-frequency component is not distinct in a signal, EMD fails to decompose a signal like the case in Figure~\ref{signal3decom}. On the other hand, the proposed procedure takes a different approach of suppressing high-frequency component. It is not required to identify some local structure of the high-frequency component. Instead, the local high-frequency pattern is seized by the ensemble  upper and lower envelopes, as shown in the left first panel of Figure~\ref{ptemd}; thus, the ensemble mean suppresses the oscillating local pattern, and at the same time, represents the lower-frequency pattern in the left second panel of Figure~\ref{ptemd}. 

\begin{figure}[!t]
\centerfig{0.75 \columnwidth}{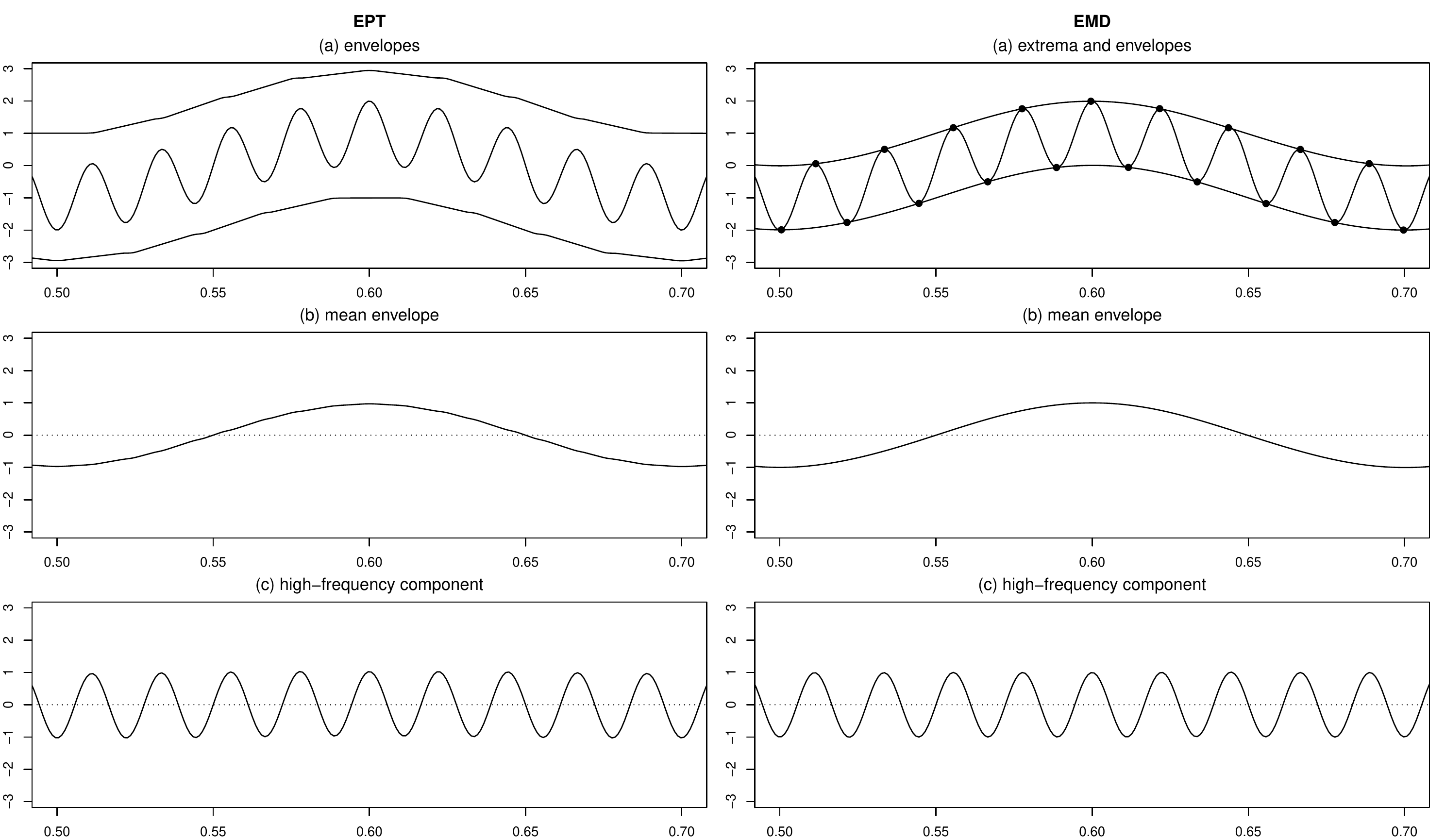}
\caption{Comparison of decomposition procedure by ensemble patch filtering and EMD.}
\label{ptemd}
\end{figure}

\subsection{Decomposition Procedure for Discrete Signals}
In this section, we focus on equally spaced discrete time signals and discuss some properties of the proposed decomposition method.   
\begin{dfn}
Let $\{X_i\}_{i=-\infty}^{\infty}$ be an equally spaced sequence such that $X_i:=X(t_i)$, where $X(t)$ denotes a continuous-time signal, $t_i=iT$ is sampling instant, and $T$ denotes sampling interval. For any $i$, the iterative representation of $X_i$ with filter ${\cal M}$ is defined as $IR_{i,\infty}(\{X_i\},{\cal M}):=\lim_{k\rightarrow \infty} IR_{i,k}(\{X_i\},{\cal M})$, where $IR_{i,k}(\{X_i\},{\cal M})=IR_{i,k-1}(\{X_i\},{\cal M})+{\cal M}(X_i-IR_{i,k-1}(\{X_i\},{\cal M}))$ and $IR_{i,1}(\{X_i\},{\cal M})={\cal M}X_i$. Furthermore, the sequence $\{X_i\}$ is said to be (iteratively) representable with filter ${\cal M}$ if $IR_{i,\infty}(\{X_i\},{\cal M})=X_i$ for all $i$. 
\end{dfn}

We note that if $X_i={\cal M}X_i$ for all $i$, then $\{X_i\}$ is (iteratively) representable with filter $\cal M$. 
\begin{dfn}
Let $\{X_i\}_{i=-\infty}^{\infty}$ be an equally spaced sequence such that $X_i:=X(t_i)$, where $X(t)$ denotes a continuous-time signal, $t_i=iT$ is sampling instant, and $T$ denotes sampling interval. Suppose that $\{X_i\}$ consists of two components as $X_i=h_i+g_i$ for all $i$. The component $\{h_i\}$ is said to be cancellable from $\{X_i\}$ with filter ${\cal M}$ if ${\cal M} X_i= {\cal M}g_i$ for all $i$. 
\end{dfn}

We then have the following result. 
\begin{thm}
Let $\{X_i\}_{i=-\infty}^{\infty}$ be an equally spaced sequence such that $X_i:=X(t_i)$, where $X(t)$ denotes a continuous-time signal, $t_i=iT$ is sampling instant, and $T$ denotes sampling interval. Suppose that $\{X_i\}$ consists of two components as $X_i=h_i+g_i$ for all $i$. Assume that (i) $\{g_i\}$ is (iteratively) representable with filter $\cal M$, and (ii) $\{h_i\}$ is cancellable from $\{X_i\}$ with filter $\cal M$. Then, it follows that $X_i-IR_{i,\infty}(\{X_i\},{\cal M})=h_i$ for all $i$. 
\end{thm}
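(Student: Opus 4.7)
My plan is to show by induction on the iteration index $k$ that the iterate computed from $\{X_i\}$ coincides with the iterate computed from $\{g_i\}$, i.e., $IR_{i,k}(\{X_i\},\mathcal{M}) = IR_{i,k}(\{g_i\},\mathcal{M})$ for every $k\geq 1$. Once this is established, passing to the limit $k\to\infty$ turns the left-hand side into $IR_{i,\infty}(\{X_i\},\mathcal{M})$ by definition and the right-hand side into $g_i$ by hypothesis (i) that $\{g_i\}$ is iteratively representable with $\mathcal{M}$. Since $X_i = h_i + g_i$, subtracting yields the desired identity $X_i - IR_{i,\infty}(\{X_i\},\mathcal{M}) = h_i$.

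The base case $k=1$ is immediate from cancellability (ii): $IR_{i,1}(\{X_i\},\mathcal{M}) = \mathcal{M} X_i = \mathcal{M} g_i = IR_{i,1}(\{g_i\},\mathcal{M})$. For the inductive step, write $A_{i,k-1}$ for the common value at level $k-1$ given by the induction hypothesis. The recursion from the definition of $IR_{i,k}$ yields
\[
IR_{i,k}(\{X_i\},\mathcal{M}) - IR_{i,k}(\{g_i\},\mathcal{M}) = \mathcal{M}(X_i - A_{i,k-1}) - \mathcal{M}(g_i - A_{i,k-1}).
\]
Using linearity of $\mathcal{M}$ (which is the natural setting here, since the canonical filter $\text{EAve}_t^{\tau_0} = \phi^{\tau_0} * \phi^{\tau_0} * \,\cdot\,$ underlying Theorem~\ref{thm1} is a convolution), this difference simplifies to $\mathcal{M} X_i - \mathcal{M} g_i$, which vanishes by cancellability. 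This closes the induction, and the limiting argument above completes the proof.

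The one delicate point, and the main obstacle I anticipate, is that the cancellability hypothesis as literally stated only gives $\mathcal{M} X_i = \mathcal{M} g_i$ at the first iterate; propagating this identity to all subsequent iterates requires either linearity of $\mathcal{M}$ or a strengthening of cancellability to $\mathcal{M}(X_i - f_i) = \mathcal{M}(g_i - f_i)$ for every sequence $\{f_i\}$ produced by the iteration. For the linear averaging filters envisaged in the algorithm, this is automatic and costs nothing; for nonlinear envelope-based measures such as $EM_t^\tau$ built from $\min$ and $\max$, one would need to assume at least translation-equivariance of $\mathcal{M}$ together with $\mathcal{M} h_i \equiv 0$ to carry the induction through.
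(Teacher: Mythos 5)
Your argument is exactly the one the paper has in mind: for this theorem the paper gives no written proof beyond the remark that it is ``directly obtained by both definitions,'' and your induction showing $IR_{i,k}(\{X_i\},\mathcal{M})=IR_{i,k}(\{g_i\},\mathcal{M})$ for every $k$, followed by passing to the limit and using hypothesis (i) to identify $IR_{i,\infty}(\{X_i\},\mathcal{M})=g_i$, is the natural way to carry that out. Your caveat is also well taken, and it is the one point the paper glosses over: cancellability as defined only yields $\mathcal{M}X_i=\mathcal{M}g_i$, which settles the base case $k=1$ but not the inductive step, where one needs $\mathcal{M}(X_i-A_{i,k-1})=\mathcal{M}(g_i-A_{i,k-1})$; for a genuinely nonlinear $\mathcal{M}$ this can fail, and with it the conclusion of the theorem, so either linearity of $\mathcal{M}$ or the strengthened cancellability you describe must be assumed. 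The paper implicitly works in the linear setting: its canonical choice ${\cal M}X_i=\textup{EAve}_i^{\tau_0}(X_i)$ is linear, the remark preceding the corollary invokes precisely ``linear filter with ${\cal M}h_i=0$'' as the route to cancellability, and its own nonlinear example ${\cal M}^*$ (median of ensemble averages) recovers $h_i$ only outside an exceptional index set. In short, your proof is correct and follows the intended route, together with a legitimate observation that the theorem's hypotheses should be read as including linearity (or an equivalent strengthening) of the filter for the induction to close.
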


A proof is directly obtained by both definitions. The following result can be considered as an analogy of Theorem~\ref{thm1} with equally spaced discrete time signals. 
\begin{lem}
\label{lemma}
Let $\{X_i\}_{i=-\infty}^{\infty}$ be an equally spaced sequence such that $X_i:=X(t_i)$, where $X(t)$ denotes a continuous-time signal, $t_i=iT$ is sampling instant, and $T$ denotes sampling interval. Suppose that $|X(\omega)|=0$ for $\omega \in \big\{\omega:~\omega=\frac{2\pi k}{\tau_0}\pm 2n\pi,\mbox{ for all}~k=1,\ldots,\tau_0-1 \mbox{ and } n\in\mathbb{N}\big\}$, where $X(\omega)$ is the Fourier transformation of $X_i$. Define the filter $\cal M$ as 
$
{\cal M} X_i = \textup{EAve}_i^{\tau_0}(X_i).
$
Then, the sequence $\{X_i\}$ is (iteratively) representable with filter $\cal M$. 
\end{lem}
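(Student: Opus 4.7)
The plan is to reduce Lemma~\ref{lemma} to the same Fourier argument used in Theorem~\ref{thm1}, and then read off the conclusion from the iteration structure of $IR_{i,k}$.

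First, I would rewrite the iteration in ``residual'' form. Set $R_i^{(k)}:=X_i-IR_{i,k}(\{X_i\},{\cal M})$. The recursion $IR_{i,k}=IR_{i,k-1}+{\cal M}(X_i-IR_{i,k-1})$ immediately gives $R_i^{(k)}=(I-{\cal M})R_i^{(k-1)}$, with $R_i^{(1)}=X_i-{\cal M}X_i=(I-{\cal M})X_i$. By a one-line induction, $R_i^{(k)}=(I-{\cal M})^{k}X_i$. Hence proving iterative representability amounts to showing $(I-{\cal M})^{k}X_i\to 0$ as $k\to\infty$ for every $i$.

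Next I would bring in the explicit form of ${\cal M}$. Exactly as in the proof of Theorem~\ref{thm1}, $\textup{EAve}_i^{\tau_0}(X_i)=\phi^{\tau_0}*\phi^{\tau_0}*X_i$, so with $\xi^{\tau_0}:=\phi^{\tau_0}*\phi^{\tau_0}$ we have
\[
R_i^{(k)}=(\delta-\xi^{\tau_0})^{*k}*X_i.
\]
Taking the Fourier transform, $\widehat{R^{(k)}}(\omega)=\bigl(1-\Xi^{\tau_0}(\omega)\bigr)^{k}X(\omega)$, where $\Xi^{\tau_0}$ is the function already computed in Theorem~\ref{thm1}. The key observation, which I would extract from that computation, is a clean dichotomy: $1-\Xi^{\tau_0}(\omega)=1$ precisely on the exceptional set $S:=\{\omega=\tfrac{2\pi k}{\tau_0}\pm 2n\pi:\ k=1,\dots,\tau_0-1,\ n\in\mathbb{N}\}$ (where $\sin(\tau_0\omega/2)=0$ but $\sin(\omega/2)\neq 0$), while $0\le 1-\Xi^{\tau_0}(\omega)<1$ on the complement of $S\cup\{2n\pi\}$, and $1-\Xi^{\tau_0}(\omega)=0$ at the integer multiples of $2\pi$.

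The hypothesis $|X(\omega)|=0$ on $S$ now does exactly the work it did in Theorem~\ref{thm1}: on $S$ the factor $(1-\Xi^{\tau_0}(\omega))^{k}$ equals one but is multiplied by zero, and off $S$ the factor is strictly less than one, so $(1-\Xi^{\tau_0}(\omega))^{k}X(\omega)\to 0$ pointwise. Inverting the Fourier transform gives $R_i^{(k)}\to 0$, i.e.\ $IR_{i,\infty}(\{X_i\},{\cal M})=X_i$ for every $i$, which is representability.

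The only delicate point, and where I would expect to have to be careful, is the passage from pointwise convergence $\widehat{R^{(k)}}\to 0$ back to pointwise convergence of $R_i^{(k)}$; this is exactly the step that Theorem~\ref{thm1} already finesses, and I would invoke the same reasoning (dominated convergence using $|1-\Xi^{\tau_0}|\le 1$ and integrability/summability of $X$), so that no new analytical machinery beyond what is tacitly used in Theorem~\ref{thm1} is needed.
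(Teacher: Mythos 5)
Your proof is correct and takes essentially the same route as the paper: the paper omits the proof of Lemma~\ref{lemma}, stating it follows from the proof of Theorem~\ref{thm1}, and your argument is exactly that reduction --- write the residual as $R_i^{(k)}=(I-{\cal M})^k X_i=(\delta-\xi^{\tau_0})^{*k}*X_i$, pass to the Fourier domain where $0\le 1-\Xi^{\tau_0}(\omega)<1$ off the exceptional set, and use the hypothesis $|X(\omega)|=0$ on that set to conclude $\widehat{R^{(k)}}(\omega)\to 0$. Your explicit residual recursion and your remark about the Fourier inversion/dominated-convergence step merely make explicit details the paper leaves tacit, without changing the approach.
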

A proof of Lemma 3.3 is easily obtained from proof of Theorem 3.1; hence, we omit it. 

\comment{
\begin{proof}
$\textup{EAve}_i^{\tau_0}(X_i)$ can be expressed as 
\[
\textup{EAve}_i^{\tau_0}(X_i)=\frac{\left(\phi^{\tau_0}_i * \phi^{\tau_0}_i \right)*\delta_{i+\tau_0-1}}{\tau_0^2} * X_i, 
\]
where $\delta_i$ is Kronecker delta function and $\phi^{\tau_0}_i$ is rectangular (boxcar) function as 
$\phi^{\tau_0}_i=\sum_{m=0}^{\tau_0-1}\delta_{i-m}$. 
Thus, the impulse response function $\xi_i^{\tau_0}$ of filter $\cal M$ can be expressed as
\[
\xi_i^{\tau_0}=\frac{\big(\phi_i^{\tau_0}\ast \phi_i^{\tau_0}\big)\ast\delta_{i+\tau_0-1}}{\tau_0^2}. 
\]
Since $\cal M$ is a linear filter, $X_i-IR_{i,1}(\{X_i\},{\cal M})$ can be expressed as 
$
(\delta_i-\xi^{\tau_0}_i)^{*k}*X_i, 
$
where $u^{*k}=\underbrace{u*u*\dots*u}_{k}$ denotes convolution power. The Fourier transform of $\xi_i^{\tau_0}$ is 
\begin{align*} 
\Xi_{\tau_0}(\omega)=\frac{e^{-(\tau_0-1)j\omega}}{\tau_0^2}\left(\sum_{m=0}^{\tau_0-1}e^{m j\omega}\right)^2 
=\begin{cases}
1 & \omega=0,\pm 2\pi, \pm 4\pi, \dots \\ 
\frac{e^{-(\tau_0-1)j\omega}}{\tau_0^2}\left(\frac{1-e^{-\tau_0j\omega}}{1-e^{-j\omega}}\right)^2  & \omega \neq 0,\pm 2\pi, \pm 4\pi, \dots
\end{cases}
\end{align*}
Note that 
\[
\frac{1-e^{\tau_0j\omega}}{1-e^{j\omega}} = \frac{e^{\frac{\tau_0j\omega}{2}}\left(e^{\frac{\tau_0j\omega}{2}}-e^{-\frac{\tau_0j\omega}{2}}\right)}{e^{\frac{j\omega}{2}}\left(e^{\frac{j\omega}{2}}-e^{-\frac{j\omega}{2}}\right)}=e^{\frac{(\tau_0-1)j\omega}{2}}\frac{\sin(\frac{\tau_0\omega}{2})}{\sin(\frac{\omega}{2})}.
\]
Thus, we obtain 
\begin{align*} 
\Xi_{\tau_0}(\omega)=\begin{cases}
1 & \omega=0,\pm 2\pi, \pm 4\pi, \dots \\ 
\left(\frac{\sin(\frac{\tau_0\omega}{2})}{\tau_0\sin(\frac{\omega}{2})}\right)^2  & \omega \neq 0,\pm 2\pi, \pm 4\pi, \dots
\end{cases}
\end{align*}
Hence, it follows that $0<1-\Xi_{\tau_0}(\omega)<1$ for $\omega \notin \Big\{0\pm 2n\pi,\frac{2\pi}{\tau_0}\pm 2n\pi,\dots,\frac{2(\tau_0-1)\pi}{\tau_0}\pm 2n\pi\Big\}$. Furthermore, from the assumption of $|X(\omega)|=0$ for$~\omega \in \Big\{0\pm 2n\pi,\frac{2\pi}{\tau_0}\pm 2n\pi,\dots,\frac{2(\tau_0-1)\pi}{\tau_0}\pm 2n\pi\Big\}$, we conclude that
\[
\big|\big(1-\Xi_{\tau_0}(\omega)\big)^k X(\omega)\big| \to 0 
\]
as $k \to \infty$. 
\end{proof}
}

We remark that suppose that $\{X_i\}$ consists of two components as $X_i=h_i+g_i$ for all $i$. If $\cal M$ is a linear filter with ${\cal M} h_i =0$ for all $i$, then the sequence $\{h_i\}$ is cancellable from $\{X_i\}$ with filter $\cal M$. Then, we obtain the following result that extends the convergence property of Lemma~\ref{lemma} with ${\cal M} X_i = \textup{EAve}_i^{\tau_0}(X_i)$ to a general linear filter ${\cal M}$ under some conditions.  
\begin{cor}
Let $\{X_i\}_{i=-\infty}^{\infty}$ be an equally spaced sequence such that $X_i:=X(t_i)$, where $X(t)$ denotes a continuous-time signal, $t_i=iT$ is sampling instant, and $T$ denotes sampling interval. Suppose that $\{X_i\}$ consists of two components as $X_i=h_i+g_i$ for all $i$. Define the filter $\cal M$ as 
$
{\cal M} X_i = \textup{EAve}_i^{\tau_0}(X_i).
$
Assume that 
\begin{enumerate}
	\item[(i)] $|G(\omega)|=0$ for $\omega \in \big\{\omega:~\omega=\frac{2\pi k}{\tau_0}\pm 2n\pi,\mbox{ for all}~k=1,\ldots,\tau_0-1 \mbox{ and } n\in\mathbb{N}\big\}$, where $G(\omega)$ is the Fourier transformation of $g_i$.
	\item[(ii)] $\{h_i\}$ satisfies $h_i=h_{i+\tau_0}$ and $\sum_{i=1}^{\tau_0}h_i=0$. 
\end{enumerate}
Then, we obtain that  $X_i-IR_{i,\infty}(\{X_i\},{\cal M})=h_i$ for all $i$. 
\end{cor}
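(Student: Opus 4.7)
The plan is to deduce the corollary as a direct consequence of Theorem 3.2 combined with Lemma 3.3, by verifying the two hypotheses (\emph{iterative representability of} $g$ and \emph{cancellability of} $h$) from the stated conditions (i) and (ii). The work therefore splits into two independent checks, followed by invoking Theorem 3.2.

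First, I would apply Lemma~\ref{lemma} to the sequence $\{g_i\}$. Condition (i) of the corollary is precisely the Fourier vanishing hypothesis of Lemma~\ref{lemma}, so the lemma immediately gives that $\{g_i\}$ is (iteratively) representable with $\mathcal{M}$; this is hypothesis (i) of Theorem 3.2. No new work is required here beyond noting that the spectral condition transfers verbatim.

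Second, I would verify that $\{h_i\}$ is cancellable from $\{X_i\}$ with $\mathcal{M}$. The remark immediately preceding the corollary reduces this to showing $\mathcal{M}h_i=0$ for all $i$, since $\mathcal{M}$ is a (linear) convolution filter, $\mathcal{M}h_i + \mathcal{M}g_i = \mathcal{M}X_i$, and so $\mathcal{M}h_i=0$ yields $\mathcal{M}X_i = \mathcal{M}g_i$. To see $\mathcal{M}h_i=0$, recall from the proof of Theorem~\ref{thm1} that $\mathcal{M}h_i = \textup{EAve}_i^{\tau_0}(h_i) = (\phi^{\tau_0} \ast \phi^{\tau_0} \ast h)_i$, a twofold box average of length $\tau_0$. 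For a single box average at index $i$, condition (ii) gives
\[
(\phi^{\tau_0} \ast h)_i \;=\; \frac{1}{\tau_0}\sum_{m=0}^{\tau_0-1} h_{i+m} \;=\; \frac{1}{\tau_0}\sum_{j=1}^{\tau_0} h_j \;=\;0,
\]
because $\tau_0$ consecutive values of a $\tau_0$-periodic sequence equal one full period and hence sum to zero by hypothesis. Convolving once more with $\phi^{\tau_0}$ preserves the zero, so $\mathcal{M}h_i=0$ for all $i$, and $\{h_i\}$ is cancellable from $\{X_i\}$.

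With both hypotheses of Theorem 3.2 established, that theorem yields $X_i - IR_{i,\infty}(\{X_i\},\mathcal{M}) = h_i$ for all $i$, which is the conclusion. I do not anticipate any real obstacle: the corollary is essentially a packaging result, and the only non-bookkeeping step is the observation that a $\tau_0$-periodic, zero-mean sequence is annihilated by a length-$\tau_0$ moving average, which is the natural discrete counterpart of the Fourier-side argument used to prove Theorem~\ref{thm1}.
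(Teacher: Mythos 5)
Your proof is correct and follows exactly the route the paper intends: the corollary is stated as a packaging of Theorem 3.2, Lemma 3.3 (giving representability of $\{g_i\}$ from condition (i)), and the remark preceding it (cancellability of $\{h_i\}$ via linearity once ${\cal M}h_i=0$, which your moving-average computation from condition (ii) establishes). The paper leaves this proof implicit, and your only substantive step --- that a length-$\tau_0$ moving average annihilates a $\tau_0$-periodic zero-sum sequence --- is the right one.
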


As for a final remark, we consider a simple example with designing an ideal filter that provides a strength of our method.  Suppose that we have a signal $X_i=h_i+g_i$, where $\{h_i\}$ satisfies $h_i=h_{i+3}$ and $\sum_{i=1}^{3}h_i=0$, and $\{g_i\}$ is a signal whose value suddenly changes from $-1$ to $1$ at $i=0$ as in Table \ref{tb:gi}. 
\begin{table}[t]
	\centering
	\caption{Siginal $\{g_i\}$}\label{tb:gi}
	\begin{tabular}{|c|c|c|c|c|c|c|c|c|c|}
		\hline
		\textbf{$i$} & $\dots$ & -3 & -2 & -1 & 0 & 1 & 2 & 3 & $\dots$ \\ \hline
		$g_i$       & $\dots$          & -1          & -1 & -1 & 1 & 1 & 1 & 1 & $\dots$ \\ \hline
	\end{tabular}
\end{table} 
We define the filter ${\cal M}$ as 
$
{\cal M} X_i = \textup{EAve}_i^{\tau_0}(X_i).
$
Note that $\{h_i\}$ is cancellable from $\{X_i\}$ with filter $\cal M$, but $\{g_i\}$ cannot be (iteratively) representable with filter ${\cal M}$ since $|G(\omega)|>0$ for some  $\omega \in \big\{\omega:~\omega=\frac{2\pi k}{3}\pm 2n\pi,\mbox{ for all }~k=1,2 \mbox{ and } n\in\mathbb{N}\big\}$, where $G(\omega)$ is the Fourier transformation of $g_i$. Thus, it is not able to obtain $h_i$ from $X_i-IR_{i,\infty}(\{X_i\},{\cal M})$. This is because the filters of the moving average class are not suitable for expressing data with a sharp mean change such as $\{g_i\}$. It is generally known that data with such a sharp mean change can be easily represented by a median filter (Gallagher and Wise, 1981). In particular, $\{g_i\}$ used in the example is a root signal since it does not change even if it passes through the median filter repeatedly. Hence, the convergence property is ensured. In summary, $ \mbox{Ave}^{\tau}_t (X_t)$ is advantageous to cancel $\{h_i\} $, but it cannot represent $\{g_i\}$ properly. On the other hand, $\mbox{Med}^{\tau}_t (X_t)$ is not capable of canceling $\{h_i\} $, but is useful for expressing $\{g_i\}$.  As a result, a combination of both filters might lead to desired decomposition results, which is feasible  under the ensemble patch transform framework, not just patch transform one. It is a benefit of the proposed transformation. We now consider a filter
$
{\cal M}^* X_i = \mbox{median}\left(\mbox{Ave}_{i+\ell}^{\tau_0}(X_i)\right).
$
Due to the property of the linear filter and the condition $\sum_{i = 1}^{3} h_i = 0$, it follows that $\mbox{Ave}_{i+\ell}^{\tau_0}(X_i)=\mbox{Ave}_{i+\ell}^{\tau_0}(h_i)+\mbox{Ave}_{i+\ell}^{\tau_0}(g_i)=\mbox{Ave}_{i+\ell}^{\tau_0}(g_i)$. So, this filter separates $h_i$ and $g_i$, and cancels $h_i$. In the example, the value of $ \mbox{Ave}_{i + \ell}^{\tau_0}(g_i)$ for each $\ell$ is listed in Table \ref {tb:calres}. Then, by passing the median filter as the second filter, we obtain a signal $\{\ \dots, -1, -2 / 3,2 / 3,1, \dots \}$, which completely represents $h_i$ except $ i \in \{- 1,0 \}$. An iterative calculation of $IR_{i,k}(\{X_i\},{\cal M}^*)$ $k=2,3\ldots$ using filter ${\cal M}^*$ provides the result in Table \ref{tb:calres}.
\begin{table}[!t]
	\centering
	\caption{Results for  $\mbox{Ave}_{i+\ell}^{\tau_0}(g_i)$, $\mbox{median}\left(\mbox{Ave}_{i+\ell}^{\tau_0}(X_i)\right)$, $IR_{i,k}(\{X_i\},{\cal M})$, $k=2,3,\ldots$.}
	\begin{tabular}{|c|c|c|c|c|c|c|c|c|c|c|}
		\hline
		\multicolumn{2}{|c|}{$i$} & $\dots$ & -3 & -2 & -1 & 0 & 1 & 2 & 3 & $\dots$ \\ \hline
		\multirow{3}{*}{$\mbox{Ave}_{i+\ell}^{\tau_0}(g_i)$} & $\ell=-1$ & $\dots$ & -1 & -1  & -1 & -2/3 & 2/3 & 1 & 1 & $\dots$ \\ \cline{2-11} 
		& $\ell=0$  & $\dots$ & -1 & -1   & -2/3 & 2/3  & 1 & 1 & 1 & $\dots$ \\ \cline{2-11} 
		& $\ell=1$  & $\dots$ & -1 & -2/3 & 2/3  & 1    & 1 & 1 & 1 & $\dots$ \\ \hline
		\multicolumn{2}{|c|}{$\mbox{median}\left(\mbox{Ave}_{i+\ell}^{\tau_0}(X_i)\right)$} & $\dots$ & -1 & -1 & -2/3   & 2/3 & 1 & 1 & 1 & $\dots$ \\ \hline
		\multicolumn{2}{|c|}{$IR_{i,2}(\{X_i\},{\cal M})$} & $\dots$ & $h_i$ & $h_i$ & $h_i-1/3$ & $h_i+1/3$ & $h_i$ & $h_i$ & $h_i$ & $\dots$ \\ \hline
		\multicolumn{2}{|c|}{$IR_{i,3}(\{X_i\},{\cal M})$} & $\dots$ & $h_i$ & $h_i$ & $h_i-1/3$ & $h_i+1/3$ & $h_i$ & $h_i$ & $h_i$ & $\dots$ \\ \hline
		\multicolumn{11}{|c|}{$\vdots$} \\ \hline
		\multicolumn{2}{|c|}{$IR_{i,\infty}(\{X_i\},{\cal M})$} & $\dots$ & $h_i$ & $h_i$ & $h_i-1/3$ & $h_i+1/3$ & $h_i$ & $h_i$ & $h_i$ & $\dots$ \\ \hline
	\end{tabular}
	\label{tb:calres}
\end{table}
We remark that in the example, the difference between $\cal M$ and ${\cal M}^*$ is found in the index set where $\{g_i\}$ can be perfectly represented. As the iteration progresses, the index set where $\{g_i\}$ is fully expressed by $X_i-IR_{i, k}({X_i}, {\cal M})$ converges to $\emptyset$ as $k \rightarrow \infty$, while the index set that $\{g_i\}$ is perfectly represented by $X_i-IR_ {i, k} ({X_i}, {\cal M}^*) $ converges to $\mathbb{Z} \setminus\{- 1 , 0 \}$.

\section{Numerical Study} \label{sec.simulation}

Here we conduct a numerical study and discuss its results to assess the practical performance of the proposed method. In this numerical study, we compare the proposed method with EMD, and discuss the merits of the proposed method over EMD. The proposed method is implemented by the algorithm introduced in Section~3.2. Various type of a generic central measure $\mathcal{G}_t^\tau$ can be applied for patch transform. Ensemble average $\mbox{EAve}_t^\tau$ is used for Examples~1 and 3, ensemble median of patch transform by average, median(Ave$^\tau_{t+\ell}(X_t)$) is for Example 2, and ensemble mean envelope $\mbox{EM}_t^\tau$ is for Example 4.


\subsection{Example 1: Composite Sinusoidal Signal}
Suppose that we have 1000 equally spaced observations from a synthetic test signal $X_t= \cos(90 \pi t) + \cos(10 \pi t)$, $t\in [0,1]$ in Figure~\ref{signal1}. 
It is expected that both EMD and the proposed decomposition by ensemble patch transform (EPT) work well. The difference of frequencies of two components are large enough to identify the local pattern of high-frequency component by EMD. By taking a suitable size parameter $\tau=21$, the proposed method separates two components efficiently as well. Figure~\ref{signal1decom} shows the decomposition results by EMD, EEMD and the proposed method, which imply that all methods work properly to decompose the signal. 


\begin{figure}[!t]
\centerfig{0.95\columnwidth}{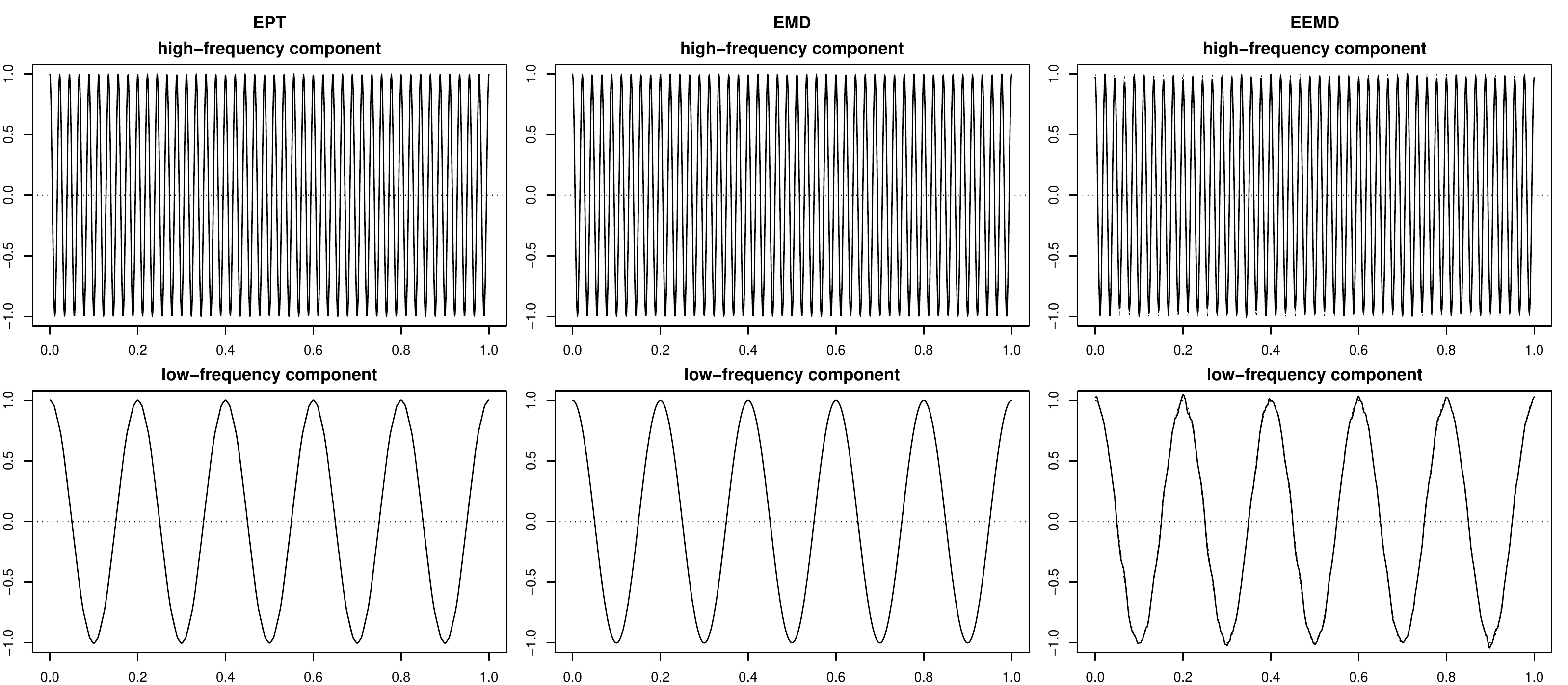}
\caption{Decomposition of test signal $X_t=\cos(90 \pi t) + \cos(10 \pi t)$. From the left to right, the decomposition results by the proposed method, EMD and EEMD, respectively.}
\label{signal1decom}
\end{figure}

\subsection{Example 2: Piecewise Signal}
We consider a non-stationary piecewise signal that consists of a low-frequency component and a high-frequency component piecewisely defined as $X_t = \cos(90 \pi t)I(t \le 0.5) + \cos(10 \pi t)I(t > 0.5), \ t \in [0,1]$ shown in Figure~\ref{signal2}. Huang et al. (1998) and Huang et al. (2003) pointed out that EMD fails to decompose a signal with mode mixing, which means that different modes of oscillations coexist in a single intrinsic mode function (IMF). On the other hand, the proposed method is able to locally suppress the high-frequency mode whose period is less than some size parameter. The dotted line and solid line of Figure~\ref{signal2decom} represent true components and extracted components by each method. From the results, we observe that the proposed method performs better than EMD and EEMD. Here we use size parameter $\tau=21$ for our method. 
\begin{figure}[!t]
\centerfig{0.85 \columnwidth}{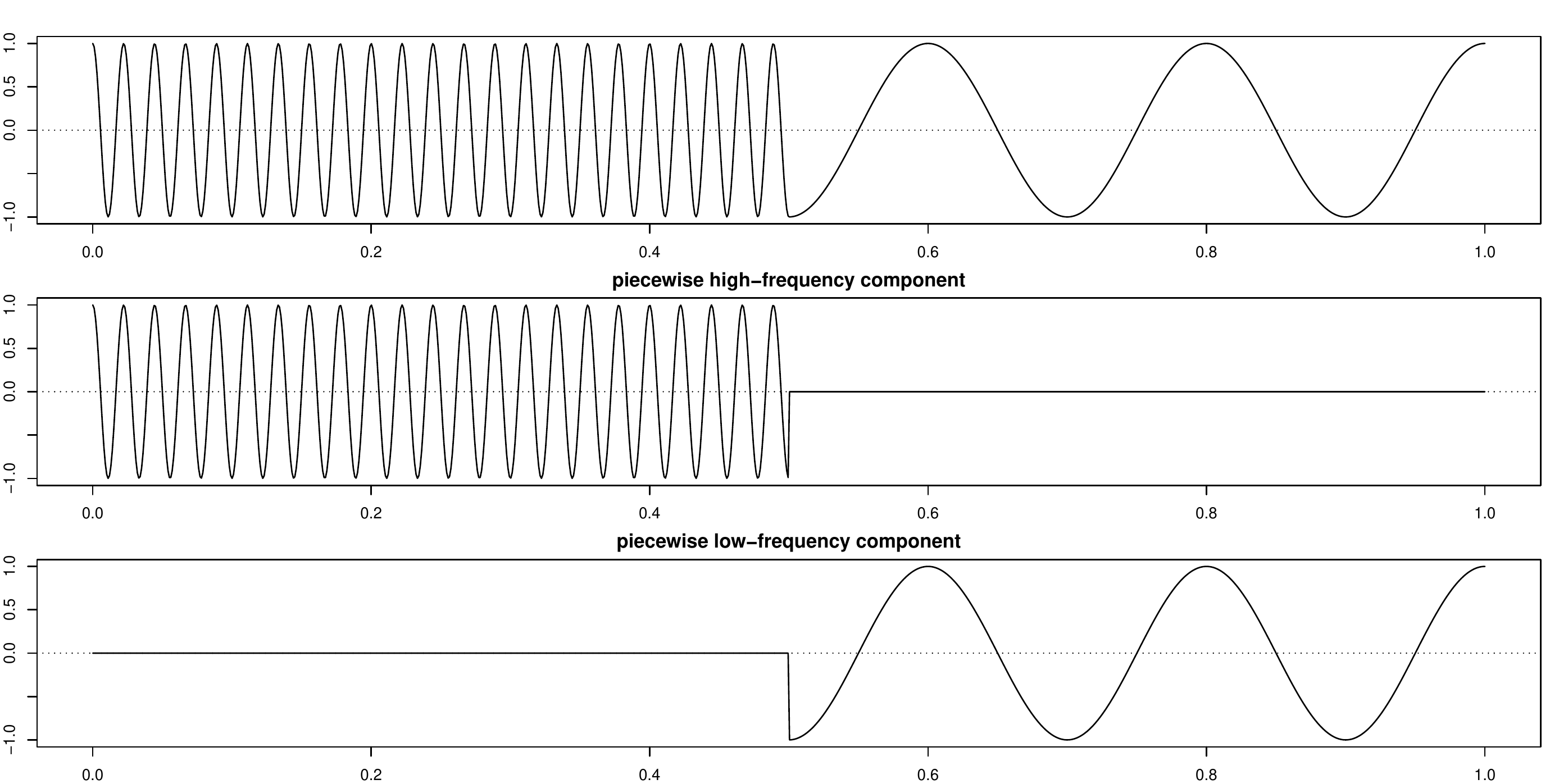}
\caption{Test signal $X_t=\cos(90 \pi t)I(t \le 0.5) + \cos(10 \pi t)I(t > 0.5)$ and its two piecewise components.}
\label{signal2}
\end{figure}

\begin{figure}[!t]
\centerfig{0.95 \columnwidth}{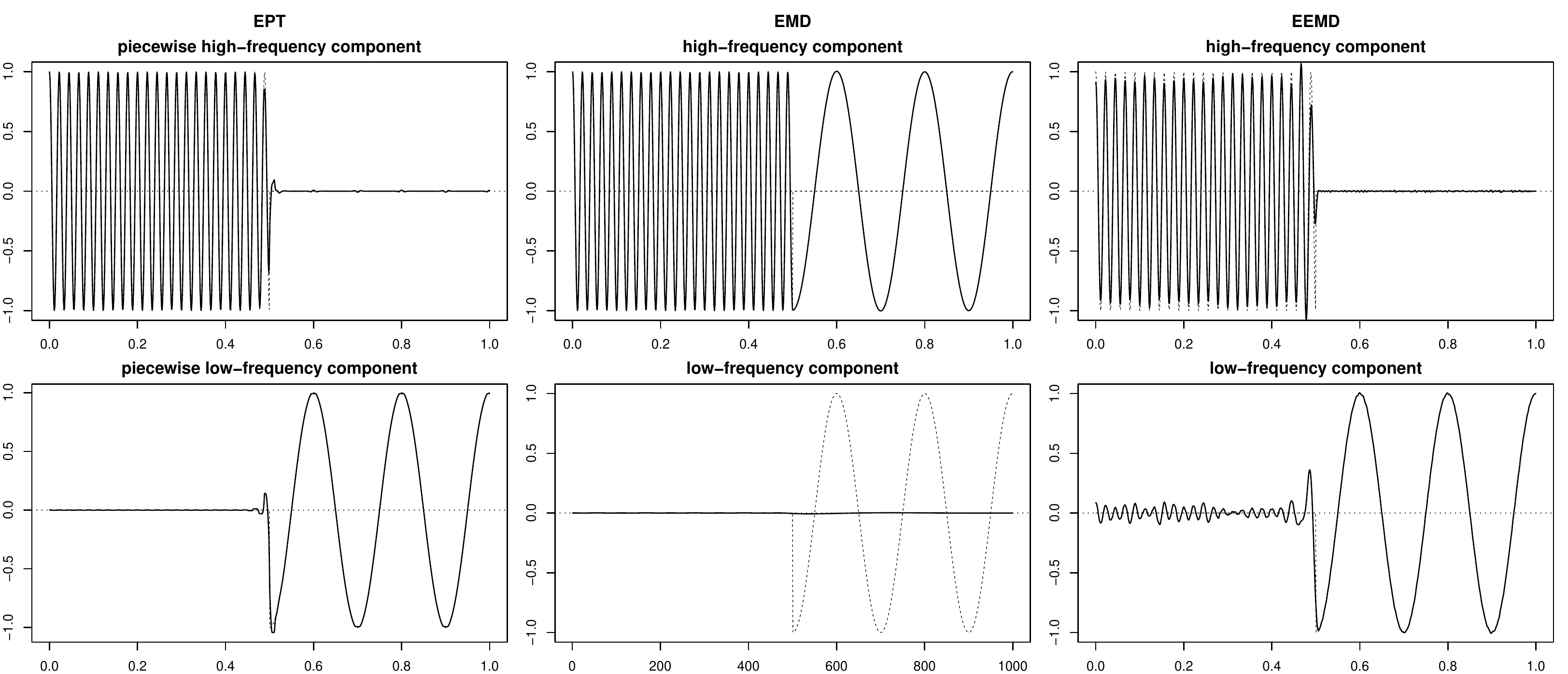}
\caption{Decomposition of test signal $X_t=\cos(90 \pi t)I(t \le 0.5) + \cos(10 \pi t)I(t > 0.5)$. From the left to right, the decomposition results by the proposed method, EMD and EEMD, respectively.}
\label{signal2decom}
\end{figure}




\subsection{Example 3: Noisy Signal}
We evaluate a robustness of the proposed decomposition to noise signals. We generate  a noisy signal $X_t+\epsilon_t$,  where  $X_t=\cos(90 \pi t) + \cos(10 \pi t)$ is the signal in Figure~\ref{signal1} and $\epsilon_t$ denote Gaussian errors with signal-to-noise ratio 7. The decomposition results by the proposed method, EMD and EEMD are shown in Figure~\ref{noisedecom}. As one can see, EMD is sensitive to noises. In fact, the effect of non-informative fluctuation distorts the subsequent decomposition results of EMD, which is due to interpolation process in the construction of envelopes based on local extrema. On the other hand, the proposed method is robust to the noises since the  decomposition is precessed without the identification of fluctuations. The decomposition results of  Figure~\ref{noisedecom} support this fact. If we regard noise as fluctuation with the highest frequency, the proposed method with relatively small $\tau$ might separate a noise from a signal. By taking the size parameter $\tau=10$, a noisy signal is decomposed as the highest component of noise and the low-frequency residue component, which corresponds to a signal $X_t$. This low-frequency residue component is repeatedly decomposed with the size parameter $\tau=21$.  We notice that EEMD performs well for decomposition. 
\comment{
\begin{figure}[!t]
\centerfig{0.85\columnwidth}{ex1noise.pdf}
\caption{Noisy test signal $X_t^3$.}
\label{noise}
\end{figure}
}
\begin{figure}[!t]
\centerfig{0.95\columnwidth}{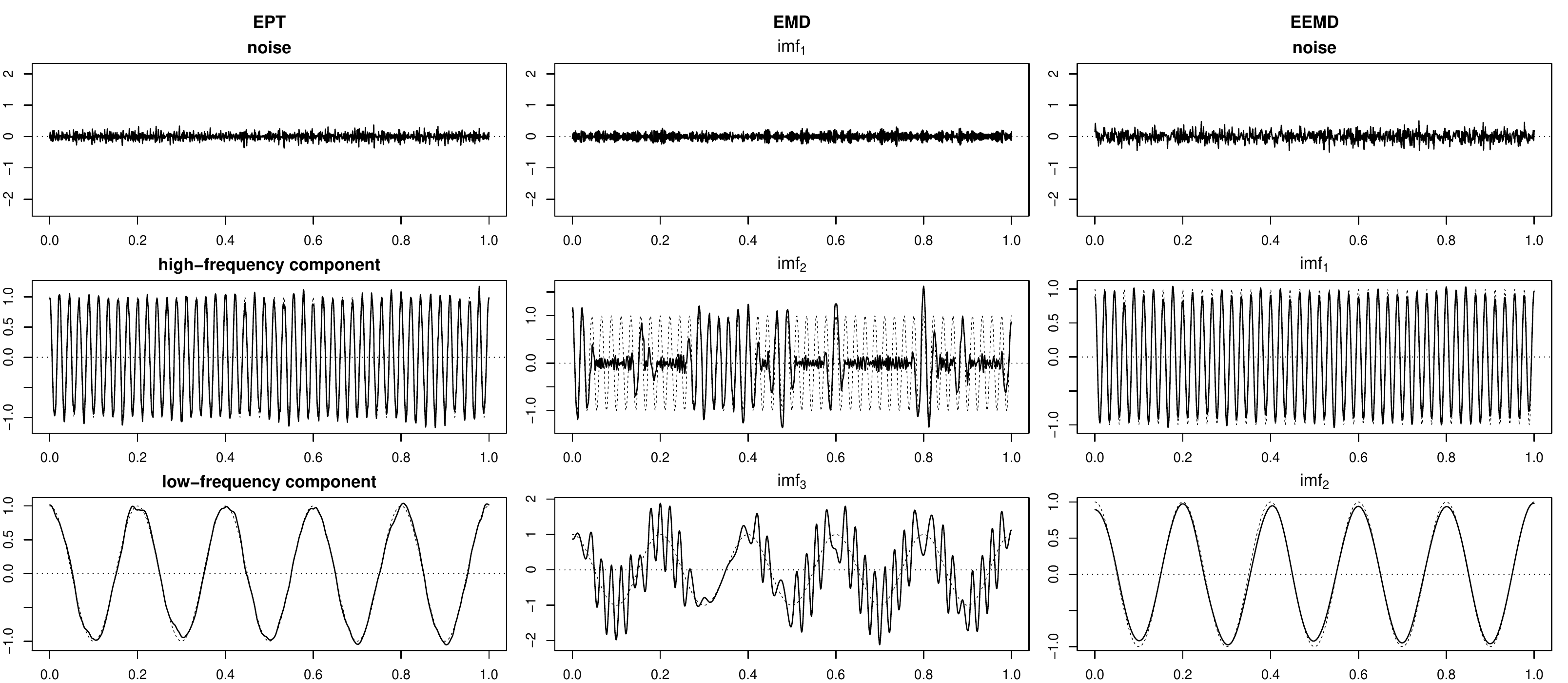}
\caption{Decomposition of noisy signal $X_t+\epsilon_t, X_t=\cos(90 \pi t) + \cos(10 \pi t)$. From the left to right, the decomposition results by the proposed method, EMD and EEMD, respectively.}
\label{noisedecom}
\end{figure}

\comment{
Decomposition by ensemble patch filtering  separates a signal into two modes. When a signal is consist of more than two modes, decomposition by ensemble patch filtering  can be applied sequentially to the low-frequency residue component.
Consider a signal $X_t^5$ of Figure~\ref{signal4} having three components,
$$
X^5_t = \cos(90 \pi t) + \cos(50 \pi t) + \cos(10 \pi t), \ t \in [0,1].
$$
For a test signal $X_t^5$, as shown in Figure~\ref{signal4decom}, both decomposition by ensemble patch filtering  and EMD sequentially extract three components effectively. In the first step of decomposition by ensemble patch filtering with $\tau=21$, a signal is decomposed as high-frequency component and residue. In the next step, the residue is sequentially decomposed into two modes by taking $\tau=38$. See the left panel of Figure~\ref{signal4decom}.
When a signal consists of several modes, it is required to extract meaningful component for interpreting the nature of a given signal. For decomposition by ensemble patch filtering, we do not need to preset the number of mode, and proceed the decomposition until the residue component is meaningless.

\begin{figure}[!ht]
\centerfig{0.8 \columnwidth}{ex4.pdf}
\caption{Test signal $X_t^5$ and its three components.}
\label{signal4}
\end{figure}

\begin{figure}[!ht]
\centerfig{0.95 \columnwidth}{ex4decom.pdf}
\caption{Decomposition of test signal $X_t^5$.}
\label{signal4decom}
\end{figure}
}

\begin{figure}[!t]
\centerfig{0.95\columnwidth}{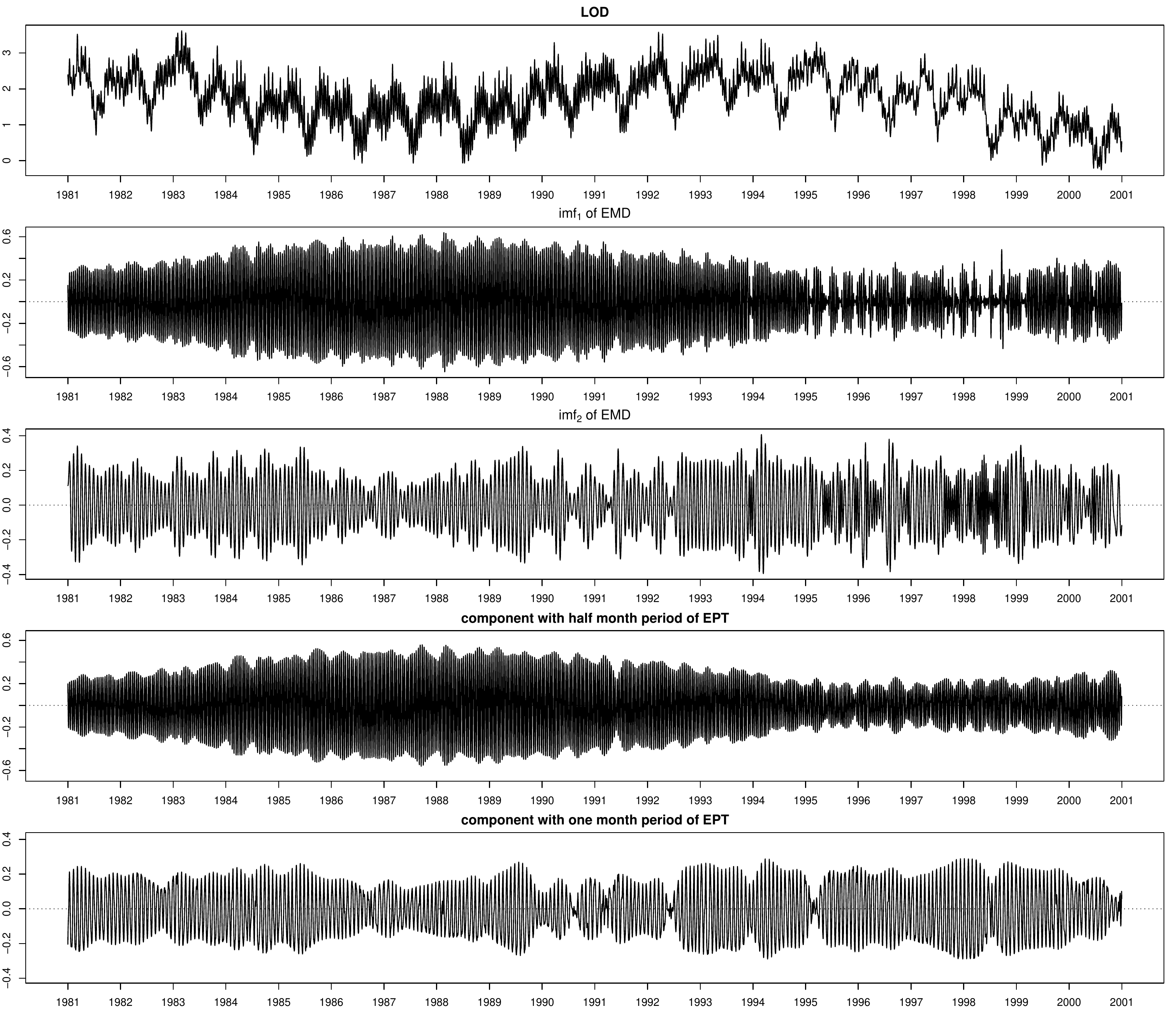}
\caption{LOD signal.}
\label{LOD}
\end{figure}

\subsection{Example 4: Length-of-Day Data}
For further demonstrating the usefulness of the proposed method, we decompose the length-of-day (LOD) data analyzed in 
Huang et al. (2003). The dataset {\tt{comb200\_daily.eop}} is available in the ftp site {\tt{ftp://euler.jpl.nasa.gov/keof/combinations/2000}}. For analysis, we use 7,305 samples that cover the period of 1 January 1981 to 31 December 2000, which are shown in the first row of Figure~\ref{LOD}. LOD is measured in millisecond. The second and third rows show two IMFs obtained by EMD, which represent certain oscillation modes of half month period and one month period. However, we observe mode mixing between year 1995 and year 1999. The proposed method separates these mixing modes successfully as shown in the fourth and fifth rows of Figure~\ref{LOD}. Each of the components has a clear physical meaning, which reveals the fluctuation mechanism of the LOD signal.


\section{Selection of Size Parameter} \label{sec.selection}

Here we discuss the selection method of size parameter $\tau$ for ensemble patch transformation. We propose two selection methods of the size parameter $\tau$. One is performed in a priori way, and the other is based on the posteriori information of the decomposition.

For the first method, we point out that the size parameter $\tau$ corresponds to a period in time domain. When a priori information of periodic pattern of a signal is available, a selection of the size parameter can be conducted based on the distribution of periodic pattern. Such information can be obtained through the empirical periods of a distance between local maxima (or local minima). Note that the empirical period is expressed by the number of observations between local maxima, not by the distance of physical time. Figure~\ref{signal1period} shows the distribution of empirical periods for a signal $X_t = \cos(90 \pi t) + \cos(10 \pi t)$ and its high-frequency component $\cos(90 \pi t)$, where the high frequency pattern is apparent in signal $X_t$. It seems that the dominated period is 21, which is set to be our estimated parameter, $\hat{\tau}=21$. In fact, the decomposition results in Section 4 are based on this selection method.   

\begin{figure}[!t]
\centerfig{0.75\columnwidth}{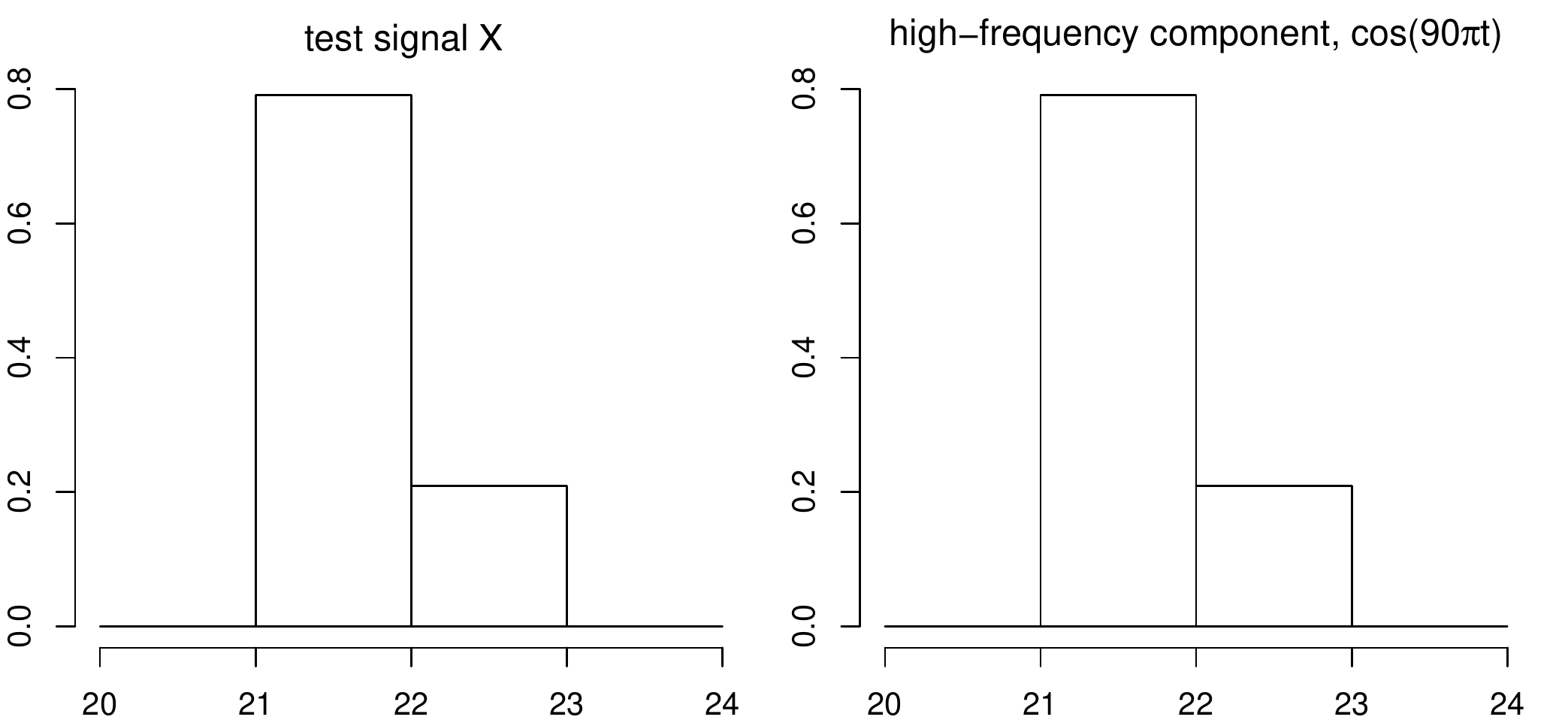}
\caption{The distribution of the empirical period for a signal $X_t = \cos(90 \pi t) + \cos(10 \pi t)$ and its component $\cos(90 \pi t)$.}
\label{signal1period}
\end{figure}

In the case that the frequency ratio of components composing a signal falls below a certain range, the local pattern of the high-frequency component may not be distinct; thus, the above selection method based on empirical periods is not appropriate. From the results in Figure 2, we observe that the proposed method might separate two components reasonably according to the frequencies. Nevertheless, the components should be weakly correlated to each other unless they are orthogonal. Hence, for the second method, we use correlation information between two components extracted by ensemble patch transformation. That is, through the grid search for a certain range of the size parameter, the size parameter $\tau$ is selected having the minimum correlation between the decomposed components. Figure~\ref{signal3corr} shows the sample correlations between the extracted components for the signal $X_t=\cos(100 \pi t) + 4 \cos(60 \pi t)$ in Figure 2 over a range of $\tau$, which produces $\hat{\tau}=16$. 

\begin{figure}[!t]
\centerfig{0.55\columnwidth}{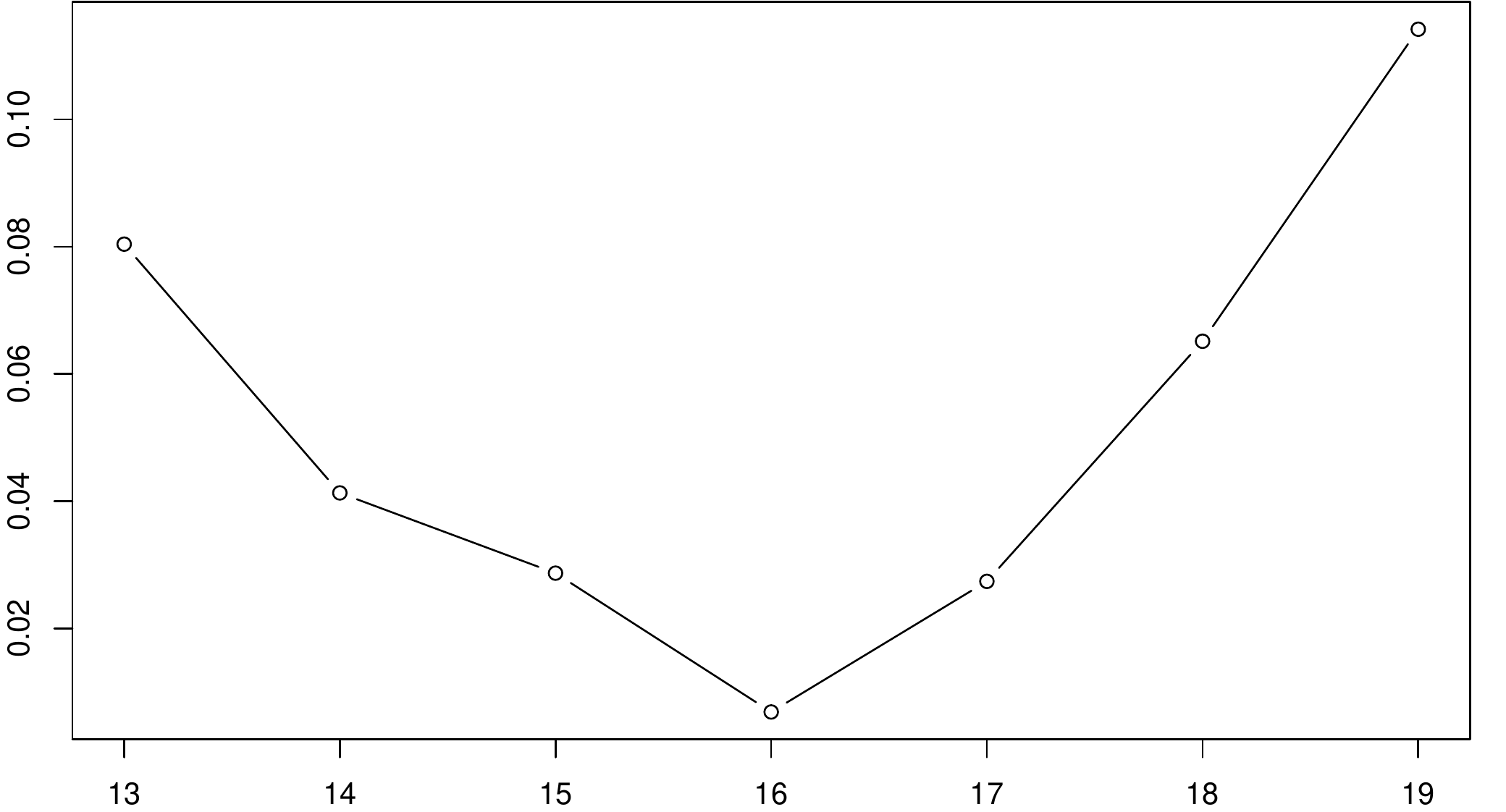}
\caption{Correlation between decomposed components of test signal $X_t=\cos(100 \pi t) + 4 \cos(60 \pi t)$ over a range of $\tau$.}
\label{signal3corr}
\end{figure}

We remark that through extensive experiments, we observe that our method is somewhat robust to the selection of size parameter. Suppose that we decompose the signal $X_t = \cos(90 \pi t) + \cos(10 \pi t)$ into two components by the proposed method with a range of  $\tau=18$ to 23.  Figure~\ref{signal1robust} shows the differences between the extracted high-frequency component and the true component $\cos(90 \pi t)$ over the range of $\tau$. As one can see, the results are robust to the choice of the size parameter $\tau$ value.

Finally, the proposed selection methods of the parameter $\tau$ lack a theoretical justification.  An objective way with theoretical backup might improve the performance and the practicality of the proposed method. This topic is left for future study.

\comment{
for selecting a proper size parameter, researcher may make use of physical information of which one want to interpret phenomenon.
As for LOD signal, fluctuations on a short scale of time can be extracted based on the known information of the periodic pattern with half month or one month. In fact, this information coincides with the empirical period of the LOD signal. The fourth and fifth row of Figure~\ref{LOD} is sequentially extracted with the size parameter $\tau=15$ and $30$.

In this case, the high frequency pattern is apparent in a signal $X^3_t$. The decomposition result of Figure~\ref{signal1decom} is produced by the size $\tau=21$. }
\begin{figure}[!t]
\centerfig{0.75\columnwidth}{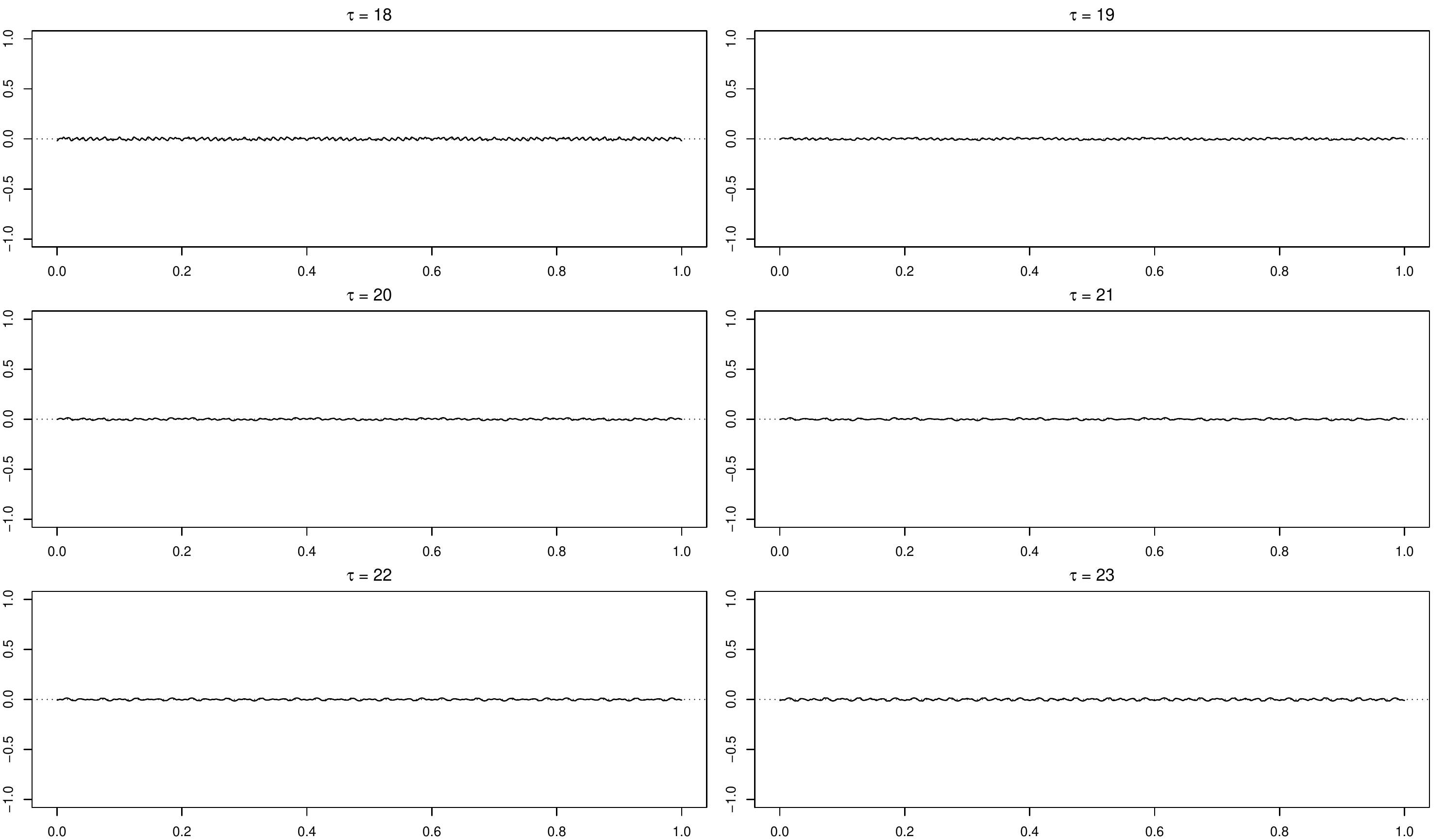}
\caption{The difference of true component $\cos(90 \pi t)$ and decomposed high-frequency component of test signal $X_t= \cos(90 \pi t) + \cos(10 \pi t)$ according to size parameter $\tau$.}
\label{signal1robust}
\end{figure}

\comment{
\subsection{Missing Values}
The multiscale decomposition methods implicitly assume that a signal is equally spaced. However, in practice, missing values occur quite common for many signals. For example, missing values may occur at random for intermittent wireless signal due to the malfunction of network device.
The proposed method is robust to missing values since its decomposition process identifies the lower-frequency component with suppressing the highly oscillating pattern and missing values have little effect on the highly oscillating pattern.

Consider a signal $X_t^3$ with 30\% missing values denoted by black points in the first row of Figure~\ref{signal1missdecom}.
The decomposition results by the proposed method in the left panel is comparable to the results by EMD in the right panel.

\begin{figure}[!ht]
\centerfig{0.95\columnwidth}{ex1mdecom.pdf}
\caption{Decomposition of test signal $X_t^3$ with 30\% missing values.}
\label{signal1missdecom}
\end{figure}
}

\section{Concluding Remarks} \label{sec.conclusion}

In this paper, we have introduced a new transformation technique, termed  `ensemble patch transformation' which is designed for  visualization and data analysis.  We have pointed out that this transformation can be used for filtering, and then proposed a decomposition method. The proposed decomposition procedure has taken a new viewpoint to handle the high-frequency component in that the high-frequency component is extracted by suppressing local oscillating pattern without the identification of high-frequency component, even though it shares a common principle with EMD where the high-frequency component is obtained by repeatedly removing lower frequency modes by sifting. We have presented an effective algorithm for implementation of the proposed method with some theoretical properties. The empirical performance of the proposed method has been evaluated throughout various numerical experiments and the analysis of real-world signal. Results from these experiments illustrate the proposed method possesses promising empirical properties.


\end{document}